\newtheorem{theorem}{Theorem}
\newtheorem{lemma}[theorem]{Lemma}
\newtheorem{corollary}[theorem]{Corollary}
\newdefinition{rmk}{Remark}
\newproof{proof}{Proof}
\newproof{proofoftheorem}{Proof of Theorem \ref{thm2}}
\newtheorem{definition}{Definition}[section]
\begin{document}

\title{Hardness of Modern Games}

\author{Diogo M. Costa}
\author{Alexandre P. Francisco}
\author{Lu\'is M. S. Russo}
\address[associate]{INESC-ID and the Computer Science and Engineering
    Department of Instituto Superior T\'ecnico, Universidade de
    Lisboa.}

\begin{abstract}

We consider the complexity properties of modern puzzle games, Hexiom, Cut the Rope and Back to Bed. The complexity of games plays an important role in the type of experience they provide to players. Back to Bed is shown to be PSPACE-Hard and the first two are shown to be NP-Hard. These results give further insight into the structure of these games and the resulting constructions may be useful in further complexity studies.

\end{abstract}
\begin{keyword}games, puzzles, complexity, reductions\end{keyword}

\maketitle

\section{Introduction}
Not long after the introduction of the notion of reducibility between computational problems~\cite{karp2,simple_npc}, it was applied to games, especially in the 2-player realm~\cite{checkerspspace,chess}, leading to a generalization of the non-deterministic model of computation~\cite{alternation}. More recently, it has also proven fruitful to study single-player games, whether pen-and-paper puzzle games~\cite{asp,fillmat} or classic video games~\cite{lemmingscormode,nintendopaper}.

In this paper, we study three modern single-player puzzle games using NP and PSPACE reductions. We obtain the following results:

\begin{itemize}
    \item Hexiom is NP-Complete (Section~\ref{proof:hexiom}) by reducing from CircuitSAT (satisfiability of boolean circuits). Hexiom uses a board made up of an hexagonal grid of cells. Pieces are placed in cells. Fixed pieces define the original puzzle and can not be moved; they may or may not be numbered. Movable pieces define the player's hand, and must be placed on the board according to the following constraint: every piece numbered with $n$ must have exactly $n$ adjacent numbered pieces. We start off by building simple gadgets to represent wires, NOT and NOR gates that will compose the boolean circuit. We finish the proof by analyzing the parity (a property relating the sum of all the numbers of the pieces and the existence of a solution) of the puzzle and adding parity gadgets to ensure that every solution of CircuitSAT instance has a corresponding solution in our Hexiom instance.
    \item Cut the Rope is NP-Hard (Section~\ref{proof:cuttherope}) by reducing from 3-CNFSAT. Cut the Rope is based on the physics of circles and ropes, where the goal is to move a circle (candy) to a certain position in the level (Om Nom). In the process, the circle must avoid falling off the level or colliding with spikes, and may be teleported between pairs of points (by hats) or float (in a bubble). We build several gadgets to represent formula components and speed adjusting. 
    \item Back to Bed is PSPACE-Hard (Section~\ref{proof:backtobed}) by reducing from TQBF (True Quantified Boolean Formula). Back to Bed is a labyrinth like game, where the goal is to maneuver a sleepwalking avatar back to bed. In the process, the player controls a different in-game avatar and is able to move obstacles and bridges to restrict the sleepwalker's movements. The sleepwalker must avoid falling off the level and colliding with patrolling dogs. We build gadgets to represent formula components; the gadgets built in this one are more robust as they may need to be traversed several times in the solution. There are also additional gadgets to keep the player and the avatar close together and prevent obstacles from being moved from one gadget to another.
\end{itemize}

The two last proofs also reinforce the fact that the 3-CNFSAT and TQBF frameworks (see Sections \ref{fw:cnf} and \ref{fw:cnf}, or~\cite{nintendopaper}), which are based on the existence of a door mechanism and aimed at platformer games (like Super Mario Bros.), can be applied to games that are neither platformer games nor have doors as an explicit mechanism.

We first motivate the use of each framework by describing the rules of the game, followed by the description of the framework, and finally the reduction itself. More details and proofs of hardness can be found either in the original papers (see Section \ref{section:rw}), or the on the thesis~\cite{mythesis} (including an example of an instance of the original problem and the respective game level, for each game).

\paragraph{Gadget} Every reduction presented in this paper relies on the concept of gadget, introduced here. Gadgets are the building blocks of the reduction, implementing a particular feature of the problem we're reducing from using the rules of the target game. The final reduction is simply a composition of gadgets. The use of gadgets makes it easier to treat the reduction modularly, with its global connectivity illustrated on the frameworks described below.

An example of gadget would be a portion of a game level implementing a clause (in the case of 3-CNFSAT or TQBF), which is typically made up of smaller gadgets that implement locks or doors. In the case of CircuitSAT, we must construct a gadget for each boolean gate, as well as signal propagating wires.

\section{Hexiom}\label{proof:hexiom}
Hexiom is a puzzle game freely playable on the web~\footnote{\url{https://www.kongregate.com/games/moonkey/hexiom}}. The game has received some attention in relation to SAT, but in the opposite direction as the one treated here: Hexiom was reduced to SAT to be solved with SAT solvers~\cite{hexiomsat,hexiomsolve}, whereas we are are reducing SAT to Hexiom, to solve instances of SAT by solving Hexiom puzzles.

\subsection{Rules} The game consists of two groups of hexagonal pieces: one group is the board, which contains empty cells and fixed pieces (numbered or not) on the hexagonal grid; the other group is the player's ``hand'' (also referred to as set $M$ throughout the section), which consists of numbered cells that must be placed on the empty cells of the board. Numbered pieces are locally constrained such that a piece numbered with $n$ has exactly $n$ \textit{numbered} cells adjacent to it. A solution for a given Hexiom puzzle is one where every piece of the player's hand is placed on the board, and every numbered piece has its constraint satisfied. An example of a puzzle is shown in Figure \ref{fig:hexiom_example}. Black hexagons represent fixed pieces that can't be moved; gray hexagons are the empty board cells; red hexagons are the pieces from the player's hand (not shown in Figure \ref{fig:hexiom_example_empty}). 

\begin{definition}
An $n$-piece is a piece numbered with $n$.
\end{definition}

\begin{figure}[!ht]
\center
 \subfloat[Empty puzzle.\label{fig:hexiom_example_empty}]{%
   \includegraphics[scale=0.4]{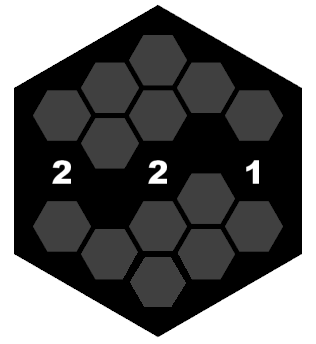}
 }
 \subfloat[Complete puzzle\label{fig:hexiom_example_filled}]{%
   \includegraphics[scale=0.4]{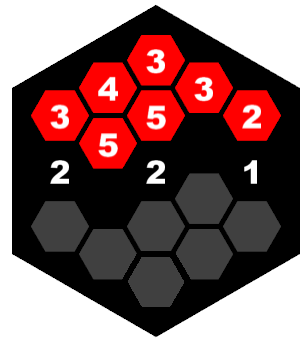}
 }
 \caption{Hexiom puzzle example.}
 \label{fig:hexiom_example}
\end{figure}

\subsection{Reduction from CircuitSAT}\label{fw:circuit}
\begin{definition} 
CircuitSAT is the problem of whether a given boolean circuit has a variable assignment that makes its output $true$.
\end{definition}
The reduction consists in a polynomial time algorithm that, given a specific boolean circuit, creates an Hexiom puzzle such that every solution of the puzzle can be transformed into a solution of the circuit, also in polynomial time. We require gadgets to: choose binary variable values (signals), propagate them through wires, and a set of functionally complete boolean gates.

The problem of crossover (of wires or paths in games), which requires additional gadgets in the two other frameworks, is already solved for CircuitSAT using three $XOR$ gates (or an equivalent composition of other gates).

Functional completeness is especially important in providing flexibility to the reductions. A set of boolean gates is functionally complete if it is sufficient to simulate any boolean gate. For example, every boolean gate can be expressed as a composition of the sets {$NOR$} or {$NAND$}, making each of them functionally complete. In our experience, a gadget to negate the signal, a $NOT$ gate, is simple to implement, and so any gadget for a gate of the set $\{AND, OR, NOR, NAND\}$ is sufficient to complete the reduction. 

Proofs based on Non-Deterministic Constraint Logic~\cite{ncl} (NCL), differ by using monotone logic; the monotonic property removes the need for $NOT$ gates, but usually requires both $AND$ and $OR$ gates. 

\subsection{Proof overview} We prove Hexiom to be NP-Complete using a reduction from CircuitSAT. The proof has two main parts: one consists in simulating a circuit and is typically the only requirement in CircuitSAT reductions; the other consists in adding extra gadgets to allow the player to place any excess pieces he may have, a condition that does not generally apply to CircuitSAT reductions. Any gray cell can be used to place a piece, but only the cells inside the gadget's borders will be useful (and necessary) to satisfy the circuit component. Outside empty cells can be used to place the excess pairs of 1-pieces.

\subsection{Circuit simulation} As described above, we start by replacing each component of a circuit with the respective gadget. In this proof, each gadget consists of one fixed group of pieces added to the board, and another of pieces added to the player's hand. The latter must allow for all the valid configurations of the gadget. Figure \ref{fig:hexiom_gadgets} shows the fixed part of every gadget. Table \ref{table:hexiom_balance} summarizes the pieces in each of the two groups for every gadget. Figure \ref{fig:hexiom_wireconfig} also shows an extended wire with its two possible configurations.

\begin{figure}[!ht]
\center
 \subfloat[Variable selection and Wire.\label{fig:hexiom_wire}]{%
   \includegraphics[scale=0.33]{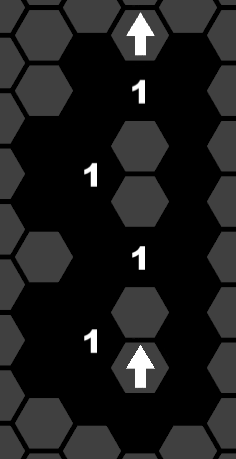}
 }
 \subfloat[Turn.\label{fig:hexiom_turn}]{%
   \includegraphics[scale=0.31]{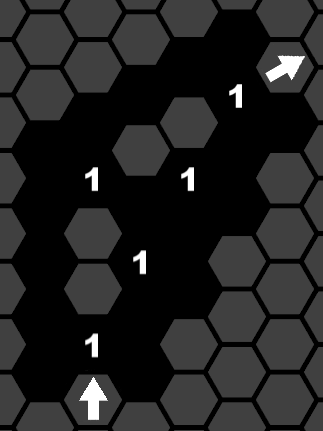}
 }
 \subfloat[Negated FAN-OUT gadget.\label{fig:hexiom_fanout}]{%
   \includegraphics[scale=0.3]{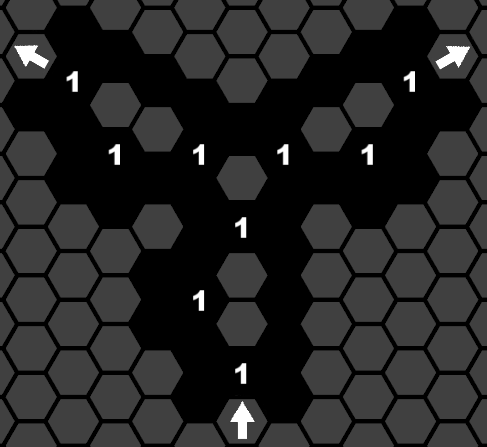}
 }
 \\
 \subfloat[NOT gadget.\label{fig:hexiom_not}]{%
   \includegraphics[scale=0.825]{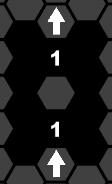}
 }
 \subfloat[NOR gadget.\label{fig:hexiom_nor}]{%
      \includegraphics[scale=0.5]{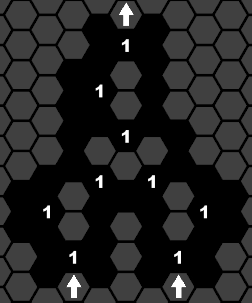}
     }
 \caption{Hexiom gadgets}
 \label{fig:hexiom_gadgets}
\end{figure}

    \paragraph{Wires, Wire extensions, Turns and Variable selection} The wire and variable selection gadgets are shown in Figure \ref{fig:hexiom_wire}. A wire is composed of one or more wire extensions and starts at another gadget's output. A wire extension is a gadget that is 3 cells long (one 1-piece with two empty cells below). Figure \ref{fig:hexiom_wire} shows a wire composed of two wire extensions, while Figure~\ref{fig:hexiom_wireconfig} shows a wire composed of three extensions. Both include variable selection. The variable selection gadget is simply an unnumbered cell below a wire extension. In Figures\ref{fig:hexiom_wire} and \ref{fig:hexiom_wireconfig}, the two bottom empty cells determine the variable's value. An extended wire and signal propagation are shown in Figure~\ref{fig:hexiom_wireconfig}. Turns are simply wire extensions that change the angle of propagation (see Figure \ref{fig:hexiom_turn}).

    \paragraph{Signals} The value of the signal is given by the position of the movable cell relative to the center 1-piece. Placement above the 1-piece represents true, while placement below the 1-piece represents false. Note that this relative position is never broken by wire extensions.
    
    \begin{lemma}The gadget shown in Figure \ref{fig:hexiom_wire} carries its input signal forward, like a wire.\end{lemma}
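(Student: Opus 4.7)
My plan is to show that the one-numbered-neighbor constraint of each fixed $1$-piece in the wire forces the movable $1$-pieces to be placed on the same side (all above, or all below) at every extension, which is exactly the statement that the signal is preserved from input to output.

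First I would identify the potential numbered neighbors of each fixed $1$-piece $p$ along the wire: by the wire geometry these are confined to the two signal cells immediately above and below $p$, together with any empty cells shared with the adjacent wire extensions. Since $p$ is a $1$-piece, exactly one of these candidate positions must hold a movable $1$-piece in any valid solution, and this is the only way to meet the local constraint.

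Next I would perform a local case analysis on two consecutive extensions with fixed $1$-pieces $p_i$ and $p_{i+1}$, enumerating the ways the movable $1$-pieces can be distributed between their respective signal cells. The goal is to rule out every mixed placement in which the movable piece for $p_i$ sits on one side while the movable piece for $p_{i+1}$ sits on the opposite side. Each such mixed placement should be seen to violate either the constraint of some fixed $1$-piece (too many or too few numbered neighbors) or the constraint of a movable $1$-piece, which, being itself a $1$-piece, must also have exactly one numbered neighbor. Once this pairwise consistency is established, a straightforward induction along the length of the wire extends it to the whole gadget and forces a single common signal value.

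The hardest part will be the hex-grid case analysis in the second step: there are several cases depending on which of the shared cells between $p_i$ and $p_{i+1}$ hosts a movable piece, and for each case every alternative assignment that might encode an inconsistent signal must be explicitly excluded. Particular care is also needed at the two ends of the wire, namely the variable-selection cell that seeds the input signal and the interface where the wire meets the next gadget, so that the boundary conditions fit the inductive argument rather than introducing additional degrees of freedom that could flip the signal.
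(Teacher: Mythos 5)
Your approach is essentially the same as the paper's: the fixed $1$-pieces along the wire each admit exactly one movable neighbor, and this forces the movable piece in every extension to sit in the same relative position (above or below the centre $1$-piece), which is precisely the conservation of the signal; the paper states this in three sentences where you propose a pairwise case analysis plus induction, but the underlying argument is identical. One detail to correct when you carry out the case analysis: the pieces placed in the wire \emph{extensions} are $2$-pieces, not $1$-pieces (each empty cell there is adjacent to two fixed $1$-pieces; only the bottom-most variable-selection cell takes a $1$-piece), so the step where you eliminate mixed placements by appealing to the movable piece's own ``exactly one numbered neighbour'' constraint must instead use the $2$-piece's constraint of exactly two numbered neighbours.
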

    \begin{proof} Because of the 1-piece on the side of the wire extension, one movable cell must be placed in one of the two empty spaces (and only one). Because of the center 1-pieces, the movable 1-piece will be placed in the same relative position throughout the whole wire. Because the signal is determined by this relative position, and it is conserved, the signal is correctly propagated.
    \qed \end{proof}
    Some remarks must be made regarding the types of cells placed. In the variable selection (bottom portion of Figure \ref{fig:hexiom_wire}), the bottom-most empty space has a single adjacent numbered cell. As exemplified in Figure~\ref{fig:hexiom_wireconfig}, to select a true value for the variable, a 1-piece must be placed. The empty space above it, however, has two adjacent numbered cells. To select a false value for the variable, a 2-piece must be placed (note that this doesn't happen in the wire \textit{extensions}, where a 2-piece is placed to propagate the signal regardless of the signal). As such, variable selection gadgets, when used in the construction, add one 1-piece and one 2-piece to the set $M$. Wire extensions add only a single 2-piece to $M$.
    The former will always have one excess piece, with which we deal after we describe the circuit gadgets. The latter has no excess piece.
    
    \begin{figure}[h!]
        \centering
        \includegraphics[scale=0.4]{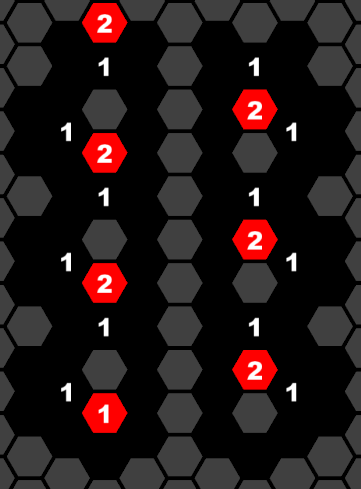}
        \caption{Hexiom's true (left) and false (right) signals through a wire.}
        \label{fig:hexiom_wireconfig}
    \end{figure}

    \paragraph{NOT gadget} The NOT gadget, shown in \ref{fig:hexiom_not}, is introduced before/after any wire extension to negate the signal. NOT gadgets are 2 cells long (one 1-piece with a single empty cell below). This is relevant because, aside from being used to negate a signal, it can also be used to shift the wire's position by using a double negation, 4 cells long. Because the wire extensions' length is always a multiple of 3, alignment problems could arise when connecting the different gadgets.
    
    \begin{lemma}The gadget shown in Figure \ref{fig:hexiom_not} behaves like a NOT gate, reversing its input signal.\end{lemma}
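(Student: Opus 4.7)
The plan is to analyze the NOT gadget locally when it is inserted between two wire segments and show, by case analysis on the input signal, that the output signal is necessarily the negation of the input. First, I would observe that the single empty cell inside the NOT must be filled by a movable numbered piece, since every piece in the player's hand has to be placed and this gadget offers only one candidate cell. The placement is therefore forced, which is the main simplification compared to the wire extension case from the previous lemma.

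Second, the key geometric observation is that wire extensions have length 3, which (as already established) preserves the above/below relative position of the movable piece as the signal propagates, while the NOT gadget has length 2. Inserting a single NOT therefore offsets the alignment of the central 1-pieces along the wire axis by one cell. Combining this offset with the constraint that each central 1-piece must have exactly one numbered neighbor pins down the movable piece in the output wire extension on the opposite side of its central 1-piece relative to the one in the input wire extension. I would then carry out a two-case argument: an input of true (movable piece above the input center 1-piece) yields an output of false, and an input of false yields true, with the two cases being symmetric by reflection.

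The main obstacle will be carefully tracking the adjacency constraints at the two boundaries between the NOT and its neighboring wire extensions. The 1-piece constraint is tight (exactly one numbered neighbor, not at least one), so every forced and every free placement must be checked against all adjacent cells to rule out alternative configurations that would preserve the signal instead of negating it. In particular, I would have to verify that the movable piece placed inside the NOT does not inadvertently satisfy the constraint of an adjacent wire's central 1-piece from the ``wrong'' side, which is where the length-2 versus length-3 parity argument does the real work. Once this local bookkeeping is carried out in both input cases, the negation property follows immediately.
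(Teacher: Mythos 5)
Your opening step is wrong, and it is not a minor slip: you assert that the single empty cell inside the NOT gadget ``must be filled by a movable numbered piece, since every piece in the player's hand has to be placed and this gadget offers only one candidate cell,'' and you build the rest of the argument on that forced placement. The paper's gadget works in exactly the opposite way: it has two valid configurations, one in which a 2-piece sits in the empty cell between the two 1-pieces (one input value) and one in which that cell is left \emph{empty} (the other input value), with the unused 2-piece becoming an excess piece that is absorbed elsewhere by the dedicated excess-piece gadgets. Your inference is also logically unsound on its own terms --- a piece from the hand must be placed \emph{somewhere on the board}, not in any particular gadget, and the entire ``Excess pieces'' machinery of the reduction exists precisely because gadget-local pieces routinely end up placed outside their gadget. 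If the middle cell really were always occupied, both adjacent 1-pieces would already have their single allowed neighbor there, the input and output positions would both be forced empty, and the gadget could not transmit (let alone negate) a signal at all.

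The second problem is that you locate the negation in a length-parity argument (length 2 versus length 3 shifting the alignment of the central 1-pieces). That is not where the negation comes from. The paper's argument is purely local constraint propagation: when the input piece sits below the bottom 1-piece, that 1-piece is satisfied, the middle cell stays empty, and the top 1-piece must be satisfied by a piece \emph{above} it (output true); when the input is true, a 2-piece goes in the middle cell, satisfying the top 1-piece from below, so nothing may be placed above it (output false). The length-2 observation is used by the paper only for the separate alignment trick (two NOTs give a length-4 double negation, and $\gcd(4,3)=1$ lets you shift wires), not to prove that a single NOT negates. To repair your proof you would need to drop the forced-placement premise, enumerate the two configurations of the middle cell as the paper does, and check the exactly-one-neighbor constraint at the two 1-pieces in each case.
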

    \begin{proof}If the input signal is false (there is a numbered cell below the bottom-most 1-piece), the output signal will be true (there will be another cell above the top-most 1-piece). If the input signal is true, there will be a 2-piece between the two 1-pieces, and so no cell above the top-most 1-piece, so the output signal is false. 
    
    The gadget will then require a single 2-piece, or no cell at all, depending on the incoming signal. The possible excess 2-piece will be treated after the NOR gate is described. Each NOT gate adds one 2-piece to set $M$.
    \qed \end{proof}
    \begin{lemma}Two NOT gates can be combined into a wire extension of length 4.\end{lemma}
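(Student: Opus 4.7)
The plan is to verify the claim by direct composition: stack two of the NOT gadgets from Figure~\ref{fig:hexiom_not} end-to-end along the wire's direction of propagation and then apply the preceding NOT lemma twice. Since each NOT gadget is described as $2$ cells long (a $1$-piece with a single empty cell), placing one immediately after the other yields a vertical strip of exactly $4$ cells, matching the advertised length.

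First I would argue functional correctness. By the NOT lemma, the first gadget outputs the negation of the incoming signal at its top cell. This negated signal then appears, by construction, at the input cell of the second NOT gadget (the two gadgets share a boundary position consistent with the ``signal = relative position of movable cell above/below the $1$-piece'' convention established for wires). Applying the NOT lemma a second time produces the negation of the negation, which equals the original signal. Hence, viewed externally, the combined $4$-cell structure transports its input signal to its output unchanged, which is precisely the behaviour required of a wire extension.

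Next I would check the bookkeeping on the piece multiset $M$. Each NOT contributes one $2$-piece to $M$, so the double-NOT contributes two $2$-pieces. Depending on the signal, either both middle empty cells receive a $2$-piece (the ``true then false'' or ``false then true'' internal configuration), or one remains empty and is compensated by the excess-piece handling described later in the text; in both orientations the local $n$-adjacency constraint of every $1$-piece in the stack is met, because each interior $1$-piece sees exactly one numbered neighbour above or below it, just as in a single NOT gadget.

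The only real subtlety, and the step I expect to require the most care, is the interface alignment: I need to confirm that the output cell of the first NOT is geometrically the same cell the second NOT treats as its input, so that the two gadgets truly share a boundary rather than leaving a gap or overlap. This is essentially a diagram-check using Figure~\ref{fig:hexiom_not}, but it is the only place where the ``$2+2=4$'' arithmetic could fail. Once that alignment is established, the lemma follows immediately, and the resulting length-$4$ double-NOT wire extension can be combined with the length-$3$ wire extensions to realise wires of every length $\geq 3$ except possibly $5$, resolving the alignment mismatches alluded to in the surrounding discussion.
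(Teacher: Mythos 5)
Your proof takes essentially the same approach as the paper's: compose two NOT gadgets end-to-end, observe that double negation restores the original signal, and add the lengths $2+2=4$. One minor slip in your bookkeeping aside (in a double-NOT exactly one of the two interior gaps receives a $2$-piece whichever the input signal is --- never both --- since the two NOT gadgets always see opposite inputs), the argument is correct, and the paper's own proof is in fact terser than yours.
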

    \begin{proof}The signal passing through the two NOT gadgets will be negated twice, becoming the original signal again. Each NOT gate has a length of 2, so the overall length is 4.
    \qed \end{proof}
    
    This can be used to shift the wires by a single cell, because $gcd(4, 3)=1$. If the wires are misaligned, we can extend one of them with a single wire extension, by 3, and the other with two NOT gates, by 4. This will bring them a single unit closer to alignment. Repeating the process will align them completely.
    
    \paragraph{Negated FAN-OUT gadget} The FAN-OUT gadget multiplies the incoming signal, propagating it in two directions. In our case, it also negates the outgoing signals, but this is not a problem given the existence of the NOT gadget. The Negated FAN-OUT gadget is just the triangular configuration of 1-pieces (see Figure \ref{fig:hexiom_fanout}. The remainder of the gadget simply illustrates how it connects to wire extensions. 
    \begin{lemma} The gadget in Figure \ref{fig:hexiom_fanout} behaves like negated a FAN-OUT gate, having two output wires carrying the opposite signal of its input wire.
    \end{lemma}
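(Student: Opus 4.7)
The plan is to exploit the ``exactly one numbered neighbour'' constraint at each of the three 1-pieces of the triangle, and trace forced placements from the input arm through the triangle interior to the two output arms. First I would identify, for each of the three triangle 1-pieces, its adjacent empty cells along the outgoing wire extension (the ``outer'' side) and the empty cells it shares with the interior of the triangle (the ``inner'' side). I would also verify that no two of the triangle 1-pieces are directly adjacent to one another, so each must receive its unique numbered neighbour independently, either from an interior placement or from a movable piece on its wire extension.

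Next I would case on the input signal as defined by the wire lemma. If the input is $true$, the wire lemma forces the movable piece of the input arm to sit on the outer side of the input's triangle 1-piece; the ``exactly one'' constraint at that 1-piece then forbids any numbered piece in the interior cell adjacent to it. Propagating this forced absence around the triangle interior via the shared adjacencies, I determine the required filling around each of the two output 1-pieces; applying the wire lemma on each output arm then reads off the position of its movable piece, which by the symmetry of the construction I expect to land on the inner side, i.e.\ carrying the negated signal. The case $input=false$ is handled by the symmetric argument: the interior cell adjacent to the input 1-piece is now forced to hold a numbered piece, which in turn satisfies the constraints of the two output 1-pieces from their inner side and forces their outer movable pieces into place.

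Finally I would tally the pieces the gadget adds to the hand $M$ and check this matches the entry of Table~\ref{table:hexiom_balance}, noting any input-independent excess to be absorbed by the parity gadgets introduced later in the section. The main obstacle I anticipate is ruling out spurious interior fillings: placements that locally satisfy all three 1-constraints but do not correspond to a genuine negated FAN-OUT — for example configurations where the two output arms disagree, or where the triangle interior is filled in a way that strands some 1-piece with zero or two numbered neighbours. Because the triangle has only a constant number of interior cells, I expect a short exhaustive enumeration to show that only the two assignments (input $true$ with both outputs $false$, and input $false$ with both outputs $true$) are globally consistent, closing the lemma.
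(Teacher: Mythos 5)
Your proposal is correct and follows essentially the same route as the paper: the paper's proof is simply the observation that the triangle's shared center cell admits exactly two valid configurations (a 3-piece, or empty), which simultaneously forces the input-side and both output-side placements into complementary positions, and your case analysis on the input signal plus exhaustive enumeration of the constant-size interior arrives at exactly this dichotomy. Your version is somewhat more explicit than the paper's about ruling out spurious configurations (e.g.\ disagreeing outputs) and about the piece accounting, but it is the same argument.
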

    \begin{proof} The gadget has two valid configurations - either a 3-piece in the center, or no cell at all. If there is no cell, the 1-piece constraints are satisfied in the wire extensions with 2-pieces.
    \qed \end{proof}
    
    A single 3-piece is added to set $M$. When the input is true, the 3-piece is in excess (can't be placed in the gadget).
    
    \paragraph{NOR gadget} The NOR gadget (Figure \ref{fig:hexiom_nor}) is similar to the FAN-OUT, in the sense that only a small portion of the gadget, composed of a triangular set of 1-pieces is required to make the gadget. The two extra empty cells suffice to create the NOR behaviour, and the remainder is a straightforward connection to the wire extensions. 
    \begin{lemma} The gadget shown in Figure \ref{fig:hexiom_nor} behaves like a NOR gate, having an output of true when both inputs are false and false otherwise.
    \end{lemma}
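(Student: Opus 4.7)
The plan is to establish the NOR semantics by an exhaustive case analysis over the four possible combinations of the two input signals, and to check that in each case the 1-piece constraints, together with the two extra empty cells, rigidly force the output-side cell into the position required by the NOR truth table.

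First I would recall that, by the signal convention from the wire lemma, each input wire presents either a numbered movable piece in the cell adjacent to the gadget's triangular core (interpreted as true) or leaves that cell empty (false). Since each of the three fixed 1-pieces in the triangle needs exactly one numbered neighbor, the input presentations tightly constrain where movable pieces can legally sit inside the gadget. I would then run through the four cases. In the both-false case, neither input contributes a numbered neighbor to the triangle, so satisfying the three 1-piece constraints simultaneously forces a movable numbered piece into the shared interior cell, which is precisely the position read as output true. In the two asymmetric cases, the single true input already supplies the one neighbor each relevant corner 1-piece needs, so any movable piece on the output side would over-saturate one of the 1-pieces; the output cell therefore stays empty and is read as false. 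In the both-true case, both input pieces contribute neighbors, so again the output cell is forced empty, giving false. These four outcomes match NOR exactly.

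Next I would account for the pieces contributed to the player's hand $M$. The gadget must add a fixed, input-independent multiset so that the global piece accounting of the reduction remains well-defined; whichever movable piece realizes the interior numbered cell in the all-false configuration is added once to $M$, and in the other three input configurations it appears as excess, to be absorbed by the outside parity/excess-handling gadgets described later, exactly as was done for the NOT and negated FAN-OUT gadgets.

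The main obstacle I expect is the rigidity step: showing that in each of the four input configurations the 1-piece constraints together with the two extra empty cells admit essentially one valid interior placement, ruling out alternative placements that would spuriously flip the output bit or leave some 1-piece unsatisfied. This reduces to a finite local check on the triangular pattern in Figure \ref{fig:hexiom_nor}; once it is carried out, the gate interpretation is well-defined and the lemma follows.
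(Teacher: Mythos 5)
Your overall plan---an exhaustive four-case analysis over the input combinations, checking which placements the three 1-pieces of the triangular core admit---is the same shape as the paper's argument, but the content of your cases does not match the gadget the paper actually builds, and the discrepancy traces back to your signal convention. The paper encodes a signal \emph{positionally}: both a true and a false input place a movable piece adjacent to the wire's 1-piece, just on opposite sides (above for true, below for false). Your premise that a false input ``leaves that cell empty'' contradicts this, and the error propagates: you conclude that the both-false case forces a piece into the interior (read as output true) and that the both-true case leaves the centre empty. The paper states exactly the reverse occupancy: a 3-piece sits in the centre when \emph{both inputs are true}, a 2-piece sits on the side of the false input when exactly one input is true, and the centre portion is \emph{empty} when both inputs are false (the output wire's ``true'' then being realised by the position of the 2-piece in the output wire extension, not by an occupied interior cell). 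So while your truth table for NOR is of course right, your claimed realisation of it by the gadget is inverted, and the ``rigidity'' check you defer to the figure would not come out the way you describe.

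The piece accounting is also off in a way that matters for the rest of the reduction. You add a single movable piece to $M$ for the gadget; the paper adds one 3-piece \emph{and} one 2-piece (see the balance table, where the NOR row lists free pieces $\{2,3\}$), since different input combinations consume different pieces and whichever is unused must be absorbed by the excess-handling gadgets. Getting this multiset wrong would break the global parity argument that follows, so it is not a cosmetic slip.
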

    \begin{proof}
    There are 4 valid configurations: a 3-piece in the center when both inputs are true, a 2-piece on the side of the false input when only one input is true, and no cell in the center portion when both inputs are false. 
    
    The gadget adds one 3-piece and one 2-piece to set $M$ to cover all four cases. Only one of the pieces is used for a given configuration, the other being in excess. 
    \qed \end{proof}
    
\subsection{Excess pieces} With the gadgets above, we can construct a level that simulates any single boolean circuit and, if the circuit is satisfiable, the player is able to place pieces in a way that satisfies every constraint of the fixed and placed pieces. However, several pieces from the player's hand remain unplaced, so we don't have a correspondence between the circuit and puzzle solutions just yet. We tackle this problem by first reducing every excess piece to 1-pieces, and then to a single excess 1-piece, effectively reducing it to a parity problem as defined below. Finally, we show how to solve the parity problem by showing how any valid solution to an Hexiom puzzle can be seen as a graph (potentially with multiple components), and how the sum of the degrees guarantees that valid circuit solutions never creates an excess 1-piece.

\begin{definition}
Hexiom's parity problem consists in determining whether, for any otherwise valid configuration of the board, the player is left with a single 1-piece in his hand.
\end{definition}

\paragraph{Reducing excess pieces to the parity problem} We will describe the additional gadgets here. Like the main gadgets described above, every gadget has a fixed component added to the board, and a component added to the player's hand. Each gadget has two valid configurations, one of them being using only pieces belonging to the gadget, and the other using one excess piece from the player's hand. Each gadget may also have excess pieces, but these will exclusively be 1-pieces. Figure \ref{fig:hexiom_excess} shows the two gadgets.

\begin{figure}[!ht]
\center
 \subfloat[\{3\} or \{1,2\} gadget.\label{fig:hexiom_3_1}]{%
   \includegraphics[scale=0.6]{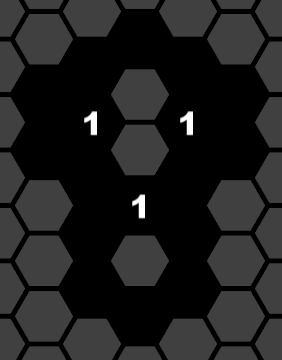}
 }
 \subfloat[\{2,1,1\} or \{1,1\} gadget.\label{fig:hexiom_2}]{%
      \includegraphics[scale=0.4]{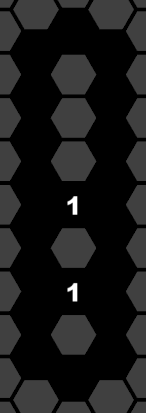}
     }
 \caption{Gadgets to solve Hexiom's excess cells}
 \label{fig:hexiom_excess}
\end{figure}

    \begin{lemma}
    The \{2,1,1\} or \{1,1\} gadget shown in Figure \ref{fig:hexiom_2} can be used to satisfy a 2-piece.
    \end{lemma}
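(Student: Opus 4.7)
The plan is to exhibit both claimed configurations explicitly, check by local counting that each satisfies every numbered piece (fixed and movable), and then argue that no other placement is legal. From the gadget's name and Figure~\ref{fig:hexiom_2}, its fixed part contains the 2-piece we want to satisfy together with a small pattern of empty cells forced by neighbouring 1-pieces, and the gadget contributes a pair of 1-pieces to the player's hand $M$; the ``optional'' piece is a single excess 2-piece coming from elsewhere in the construction.

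First I would describe the $\{1,1\}$ configuration, which uses only the two 1-pieces that the gadget itself adds to $M$. I would place them in the two empty cells adjacent to the fixed 2-piece so that (i) the 2-piece sees exactly those two 1-pieces as its numbered neighbours, (ii) each placed 1-piece has the 2-piece as its unique numbered neighbour, and (iii) no surrounding fixed constraint is disturbed. This is the mode in which the gadget absorbs no excess piece.

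Next I would describe the $\{2,1,1\}$ configuration, in which an excess 2-piece from elsewhere is additionally placed inside the gadget. I would position this 2-piece so that it is mutually adjacent to the fixed 2-piece and to exactly one of the two gadget 1-pieces, and the second 1-piece is placed in the remaining empty cell so that its sole numbered neighbour is the new 2-piece (or the fixed one, depending on the geometry). Both 2-pieces then have exactly two numbered neighbours and each 1-piece has exactly one, so all local constraints are met.

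The main obstacle, beyond this routine counting, is \emph{uniqueness}: the gadget must admit only these two valid configurations, since otherwise a spurious Hexiom solution could fail to correspond to a correct accounting of excess pieces. I would settle this by a short case analysis on which of the empty cells is occupied and by which piece, using the tight constraints of the surrounding fixed 1-pieces (each of which requires exactly one numbered neighbour) to rule out every alternative. Once uniqueness is established, the gadget can be freely instantiated wherever the main construction produces an excess 2-piece, discharging it without affecting the circuit simulation.
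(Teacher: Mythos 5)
Your high-level plan---exhibit the two named configurations and verify each by local degree counting---is the same as the paper's, but you have misread the gadget's structure, and the concrete configurations you describe would not apply to the object in Figure~\ref{fig:hexiom_2}. There is no fixed 2-piece in this gadget. Per the paper (and the accounting in Table~\ref{table:hexiom_balance}, row 2-EXCESS), the fixed part consists of \emph{two 1-pieces}, and the gadget contributes two 1-pieces to $M$; the 2-piece that the lemma ``satisfies'' is a \emph{movable} excess 2-piece generated elsewhere (by a NOT, NOR, or $\{3\}$-or-$\{1,2\}$ gadget). The two configurations are: (i) place one of the gadget's 1-pieces below the bottom fixed 1-piece and the other above the top fixed 1-piece, absorbing nothing; or (ii) place the excess 2-piece \emph{between} the two fixed 1-pieces (so it has exactly the two fixed 1-pieces as neighbours and each fixed 1-piece has exactly it), and park the gadget's own two 1-pieces adjacent to one another at the top of the gadget, where they satisfy each other. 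Your proposed placements (1-pieces flanking a fixed 2-piece; an excess 2-piece adjacent to a fixed 2-piece) describe a different gadget, so as written the proof would not go through against the actual figure.

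Two smaller points. First, the roles of fixed and movable pieces matter for the global argument: the absorbed 2-piece must be movable precisely because it is a surplus hand piece that would otherwise block any complete placement, so a design that makes it fixed changes what the gadget accomplishes. Second, you identify uniqueness of the two configurations as the main obstacle; the paper does not argue uniqueness here at all---it only exhibits the two configurations---so while your instinct is reasonable for the soundness of the reverse direction of the reduction, it is not part of the paper's proof of this lemma.
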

    \begin{proof} The gadget can be satisfied by placing a 1-piece below the bottom 1-piece and another 1-piece above the top 1-piece. Alternative, a single 2-piece can be placed between the two 1-pieces, and two 1-pieces are placed adjacent to each other at the top of the gadget.
    \qed \end{proof}
    
    Two 1-pieces are added to the set $M$. We add one of these gadgets for each NOT gate, NOR gate, and \{3\} or \{1,2\} gadget described below.

    \begin{lemma}
    The \{3\} or \{1,2\} gadget shown in Figure \ref{fig:hexiom_3_1} can be used to satisfy one 3-piece at the cost of one 1-piece and one 2-piece.
    \end{lemma}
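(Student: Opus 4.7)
The plan is to handle this exactly like the previous gadget lemmas: exhibit both valid fillings explicitly, verify the piece constraints locally in each, argue that these are the only two, and then bookkeep how the gadget interacts with the set $M$.

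First, I would describe configuration \{3\}, in which a single $3$-piece occupies the distinguished central cell of Figure \ref{fig:hexiom_3_1} whose three hexagonal neighbors are numbered fixed pieces. I would then describe configuration \{1,2\}, in which the central cell is empty and a $1$-piece together with a $2$-piece are placed in two designated border cells. In each case I would check cell by cell that every numbered fixed piece has the prescribed number of numbered neighbors and that each placed piece's own count is met by its local neighborhood.

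Next, I would argue uniqueness by a short case split on what occupies the central cell: a $3$-piece forces configuration \{3\}; an empty center forces configuration \{1,2\} because the surrounding fixed $1$-pieces must each still have one numbered neighbor, which propagates to a unique placement of the $1$-piece and $2$-piece on the border; and centering a $1$- or $2$-piece instead either over- or undercounts the neighbors of an adjacent fixed piece, so those options are rejected.

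Finally, I would track the effect on $M$: the gadget contributes one $1$-piece and one $2$-piece to the hand, so if an excess $3$-piece (produced, for instance, by a NOR or negated FAN-OUT gadget in its unused-center state) is placed here via configuration \{3\}, those contributed $1$- and $2$-pieces remain excess --- precisely the ``cost'' stated; otherwise configuration \{1,2\} consumes both of them and no $3$-piece is absorbed. The step I expect to be the main obstacle is the uniqueness case split, since alternative partial placements along the gadget's border look locally plausible and each must be rejected by a careful neighbor-count tally at a shared fixed $1$-piece; once that bookkeeping is spelled out, the rest of the argument is routine, and the lemma composes cleanly with the previous $\{2,1,1\}$-or-$\{1,1\}$ gadget to reduce all excess pieces to the parity problem.
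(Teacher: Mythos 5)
Your proposal matches the paper's argument: the paper's proof likewise just exhibits the two fillings --- a single $3$-piece in the center satisfying the three adjacent fixed $1$-pieces, or a $2$-piece on the top free cell (adjacent to two $1$-pieces) plus a $1$-piece at the bottom --- and the bookkeeping you describe (one $1$-piece and one $2$-piece added to $M$, with the leftover $2$-piece later absorbed by a \{2,1,1\}-or-\{1,1\} gadget) is exactly what the paper does after the lemma. Your added uniqueness case split is more than the paper bothers to write, but it is consistent with and only strengthens the same approach.
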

    \begin{proof} A 2-piece may only be placed on the top-most free cell of the gadget, adjacent to two 1-pieces. Then, the bottom cell must be satisfied by placing another 1-piece at the bottom of the gadget. Alternatively, a single 3-piece may be placed in the center of the gadget, satisfying the three adjacent 1-pieces.
    \qed \end{proof}
    
    To make use of this gadget, we add a 1-piece and a 2-piece to $M$. We add one gadget for each NOR and FAN-OUT gadgets. In this way, given the previous gadget, we can replace a 3-piece with a 1-piece whenever we get a surplus 3-piece either of the two gates.
    
    Note that because this \{3\} or \{1,2\} gadget may have an excess 2-piece, we add one \{2,1,1\} or \{1,1\} gadget for every \{3\} or \{1,2\} gadget too.

\paragraph{Solving parity} At this point, we have an arbitrary number of excess 1-pieces. The first step is to realize that two 1-pieces can be satisfied simply by being placed adjacent to one another. As a result, we can place every pair of 1-piece in any empty board space. We are left with, at most, one excess 1-piece. 

One possible solution is to double the whole construction. If the player has a satisfying configuration, choosing the same assignment on both copies will result in $(2\times0)$, or $(2\times1)$ 1-pieces, which can be easily solved, since the number is even.

We present a different solution that, although not as straightforward, will incur almost no blow-up in size: only a single fixed 1-piece is added, at most.
We argue that this extra 1-piece is configuration-independent, that is, it depends only on the puzzle, and not the variable assignments.

    Our argument rests on the following observation:
    \begin{lemma}\textit{Every connected and valid Hexiom configuration corresponds to an instance of a graph $G$ (with a set of vertices $V$ and a set of edges $E$) where each numbered cell becomes a vertex, every adjacent numbered cell shares an edge and the number of each cell corresponds to the degree of its respective vertex.}
    \end{lemma}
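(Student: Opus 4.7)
The plan is to give a direct constructive proof, since the statement is essentially a translation of Hexiom's local constraint into graph-theoretic language. First I would define the graph $G=(V,E)$ explicitly: let $V$ be the set of numbered cells in the given configuration, and let $\{u,v\}\in E$ whenever the cells corresponding to $u$ and $v$ are adjacent on the hexagonal grid. Adjacency on the grid is symmetric and irreflexive, so this yields a well-defined simple undirected graph, and the map from configurations to graphs is clearly computable in polynomial time.

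Next I would verify the degree condition, which is the only place where Hexiom's rules are actually used. For any vertex $v$ corresponding to an $n$-piece, its degree in $G$ equals, by the definition of $E$, the number of numbered cells adjacent to that piece on the board. By Hexiom's validity constraint (a piece numbered with $n$ has exactly $n$ numbered neighbours), this count is exactly $n$, so $\deg(v)=n$. Unnumbered fixed cells contribute no vertex and no edge, which is consistent with the fact that they impose no constraint.

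Finally I would argue that connectedness transfers directly: a configuration is ``connected'' precisely in the sense that its numbered cells form a single component under hexagonal adjacency, and this is exactly the condition that $G$ is connected as a graph. I do not anticipate a real obstacle here; the lemma is close to a tautology once the encoding is spelled out. The only subtlety worth flagging explicitly is that the statement has to be understood for \emph{connected} configurations, since the later argument about parity will sum $\deg(v)$ over a single component and invoke the handshake lemma; for disconnected configurations one simply applies the lemma component-wise.
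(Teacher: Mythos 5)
Your proof is correct and takes essentially the same approach as the paper: identify numbered cells with vertices, adjacencies with edges, and observe that the validity constraint (an $n$-piece has exactly $n$ numbered neighbours) is precisely the statement $\deg(v)=n$. The paper's own proof is just a terser version of this, so your additional care about well-definedness and connectedness is a welcome but inessential elaboration.
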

    \begin{proof} This observation is illustrated in Figure \ref{fig:hexiom_graph} for the previously shown example in Figure \ref{fig:hexiom_example}.  In a valid Hexiom configuration, each $n$-piece is satisfied if and only if it has exactly $n$ adjacent cells. On the graph, this is the number of edges of a vertex, its degree.\qed \end{proof}
    
    This is relevant because we know that: $$\sum_{v \in{V}}^{} degree(v) = 2 \times \left\vert{E}\right\vert$$ From this, we can conclude that the sum of every cell's number, in valid a configuration, must be even.
    
    \begin{corollary}Every valid Hexiom configuration is an instance of a graph where the number of each cell corresponds to the degree of its respective vertex.
    \end{corollary}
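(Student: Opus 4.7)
The plan is to reduce the general case to the connected case handled by the preceding lemma. Given an arbitrary valid Hexiom configuration, partition its numbered cells into maximal connected components under the hexagonal adjacency relation (two numbered cells being related when they sit in adjacent cells of the board). I would first observe that each such component, viewed in isolation, is itself a valid Hexiom configuration: the constraint that an $n$-piece have exactly $n$ numbered neighbors is purely local, and every neighbor of a cell in a component belongs by definition to the same component, so no constraints are broken by this restriction.

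Next, apply the preceding lemma to each component separately, obtaining a graph $G_i = (V_i, E_i)$ in which every vertex has degree equal to the number on the corresponding cell. Finally, take the disjoint union $G = \bigsqcup_i G_i$. Since cells in different components are by construction non-adjacent, no edges between components are missed, and the disjoint union faithfully records all adjacencies of numbered cells in the full configuration. Each vertex's degree in $G$ is still the number of its adjacent numbered cells, which equals the number written on the corresponding piece.

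The only subtlety is the need to check that component-wise validity implies the local constraints hold in the combined graph; this is immediate because degree and adjacency are preserved by disjoint union. In particular, $\sum_{v \in V} \mathrm{degree}(v) = 2|E|$ holds for $G$, which is precisely the even-parity fact the excess-piece argument needs.
\qed
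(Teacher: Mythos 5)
Your proposal is correct and follows essentially the same route as the paper: the paper's own justification is precisely that a valid configuration decomposes into valid connected configurations, each covered by the preceding lemma, so the total degree sum is a sum of even numbers and hence even. Your version merely spells out the disjoint-union step and the locality of the constraint more explicitly, which the paper leaves implicit.
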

    This follows simply because a valid configuration is composed of all the valid connected configurations. The overall sum of the cells' numbers will be a sum of even numbers, which is also an even number.
    
    \begin{figure}[!ht]
     \center
     \includegraphics[scale=0.3]{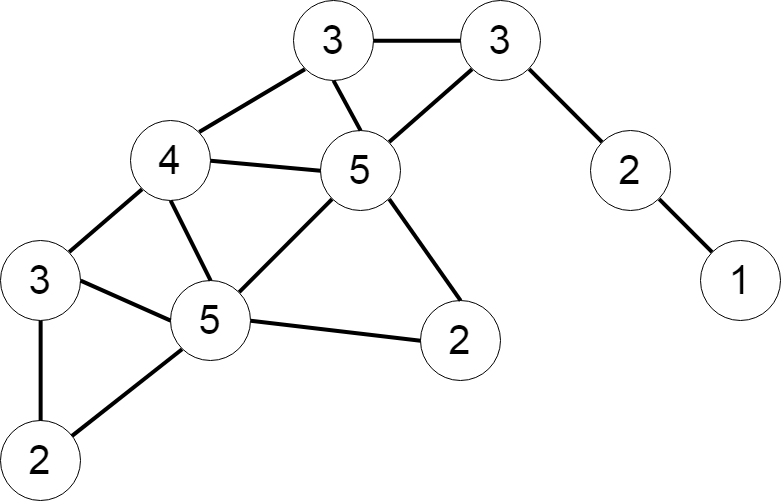}
     \caption{Graph of example in Figure \ref{fig:hexiom_example}}
     \label{fig:hexiom_graph}
    \end{figure}
    
    The whole construction must then have an even sum of degree.
    We then have a global property for the puzzle that, for some $k \in{\mathbb{N}}$: $$\sum_{piece \in{F}\cup{M}}^{} number(piece) = 2 \times k$$
    
    Each gadget, as described, has a constant sum for $F \cup M$. This means that the final construction, too, will have a constant sum, as all the elements of $M$ must be placed on the board. This solves our parity problem, because we can simply sum the numbers of the pieces of every gadget and check whether the result is even or odd, independently of the specific configuration. If the number is odd, we know that we will always end up with an extra 1-piece once we find an otherwise valid configuration (that must sum to even), and so we can add a single fixed 1-piece to satisfy the excess piece. Conversely, if it sums to even, we will always end up with an even number of 1-pieces, so no additional gadget is needed.

    To conclude, the following table shows the balance of each gadget in terms of fixed and free pieces to make the final accounting easier. In the end, the overall sum must be even.
    \begin{center}
    \begin{tabular}{ |c|c|c|c| }
    \hline
    \label{table:hexiom_balance}
     Gadget & Fixed & Free & Total  \\ [0.5ex] 
     \hline \hline
     SELECT & \{1,1\} & \{1,2\} & 5 \\ 
     WIRE & \{1,1\} & \{2\} & 4 \\
     NOT & \{1\} & \{2\} & 3 \\ 
     FAN-OUT & \{1,1\} & \{3\} & 5 \\
     NOR & \{1\} & \{2, 3\} & 6 \\
     2-EXCESS & \{1,1\} & \{1,1\} & 4 \\
     3-or-\{2,1\} & \{1,1,1\} & \{2,1\} & 6 \\
     \hline
    \end{tabular}
    \end{center}

    \begin{theorem}
        Hexiom is NP-Complete.
    \end{theorem}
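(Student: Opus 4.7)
The plan is to prove membership in NP first, and then NP-hardness by assembling the gadgets already established into a polynomial-time reduction from CircuitSAT. Membership in NP is immediate: a certificate is the chosen position of each piece in $M$, and verification only needs to scan every numbered cell on the board and count its numbered neighbours, which is linear in the board size.

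For hardness, given a boolean circuit $C$, I would build a Hexiom puzzle as follows. First, I would rewrite $C$ over the functionally complete set $\{$NOT, NOR$\}$ together with FAN-OUT, using the fact (noted in Section~\ref{fw:circuit}) that crossovers can be simulated inside CircuitSAT with three XOR gates, each expressible using NOT and NOR. I would then lay out each component on the hexagonal grid as its corresponding gadget from Figure~\ref{fig:hexiom_gadgets}, connecting outputs to inputs with wire extensions, using turn gadgets to change direction, and using the $\gcd(4,3)=1$ double-NOT trick to fix any residual misalignment between consecutive gadgets. I would place the gadgets with enough blocked/empty buffer so that no numbered cell of one gadget is adjacent to a numbered cell of another, which guarantees that the per-gadget lemmas apply locally and there is no way a piece in one gadget can help satisfy a constraint in another. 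Finally, I would fix the circuit output to true by terminating the output wire in a configuration (for instance, a short wire extension whose top cell is pre-occupied by a fixed 1-piece forcing the movable piece above the central 1-piece) that admits a solution only when the outgoing signal is true.

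The next step is to discharge the excess pieces. For every NOR and every FAN-OUT, I attach one \{3\} or \{1,2\} gadget (Figure~\ref{fig:hexiom_3_1}) so that a surplus 3-piece can always be converted; for every NOT, NOR, and \{3\} or \{1,2\} gadget, I attach one \{2,1,1\} or \{1,1\} gadget (Figure~\ref{fig:hexiom_2}) to absorb surplus 2-pieces. After these attachments, every remaining excess piece is a 1-piece, and pairs of 1-pieces can be mutually satisfied in any unused empty cells outside the gadget borders. To settle the last 1-piece, I would compute the total $S = \sum_{p \in F \cup M} \operatorname{number}(p)$ using the per-gadget row totals tabulated above; by the corollary that every valid configuration sums to an even number, if $S$ is odd I add one fixed 1-piece to the board (and, as needed, a matching movable 1-piece to $M$ to keep $S$ even), which absorbs the residual excess without introducing spurious solutions.

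Correctness has two directions. Given a satisfying assignment of $C$, propagating the signals through each gadget yields a valid local configuration by the gadget lemmas, the excess-management gadgets can always be filled, and the remaining even number of 1-pieces can be paired on outside cells; this produces a valid Hexiom solution. Conversely, any valid Hexiom solution must realise one of the allowed configurations inside each gadget (again by the gadget lemmas and the buffering), so reading the relative position of the movable piece at each variable-selection gadget defines an assignment under which $C$ evaluates to true at the forced output. The construction is clearly polynomial in $|C|$, completing the reduction. The main obstacle in practice is the second direction: one has to be careful that buffers and alignment shifts do not accidentally create adjacencies that let two gadgets satisfy each other's constraints in an unintended way; I would address this by placing fixed unnumbered separator pieces (or sufficient empty space) along gadget boundaries so that the gadget lemmas transfer verbatim to the composite puzzle.
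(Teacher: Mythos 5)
Your proposal follows essentially the same route as the paper: NP membership by local neighbour checking, hardness by composing the wire/NOT/NOR/FAN-OUT gadgets into the circuit, discharging surplus pieces through the \{3\}-or-\{1,2\} and \{2,1,1\}-or-\{1,1\} gadgets, and resolving the final parity via the degree-sum argument with at most one extra fixed 1-piece. One small slip worth noting: when $S$ is odd, adding the single fixed 1-piece alone already restores even parity, so the parenthetical ``matching movable 1-piece'' would, if actually added, flip $S$ back to odd and undo the fix.
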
 
    \begin{proof} We show that CircuitSAT $\le _P$ Hexiom by replacing every circuit wire and gate by its respective gadget, and adding the necessary gadgets to satisfy excess pieces. Every circuit solution will have at least one Hexiom puzzle solution and every Hexiom puzzle solution has exactly one circuit solution. The solution from the puzzle to the circuit is taken from the variable assignments which can be extracted in linear time, achieving a polynomial-time reduction. As a result, Hexiom is NP-Hard.
    
    We show that Hexiom $\in$ NP by the existence of a polynomial time verification algorithm. Given a board and every piece's position, we iterate through every numbered piece, checking the local constraint. Each piece only need to evaluate its six neighbours, for a total complexity of $O(|F \cup M| \times 6)$, which is polynomial.
    \qed \end{proof}

\section{Cut the Rope}\label{proof:cuttherope}
    Cut The Rope is a very popular physics based mobile game developed by Zepto Lab and first released in 2010. It has since spawned many sequels and variations. Here, we prove that the original game is \textbf{NP}-Hard, using the 3-CNFSAT framework presented in Section \ref{fw:cnf}.
    
    \subsection{Rules}
    The objective of the game is to carry a candy (avatar) to Om Nom's mouth (finish line). The player does not control the candy directly, however. The levels have a variety of physics-based objects with which the player interacts to change the candy's trajectory. The relevant game rules for this proof are the following:
    \begin{itemize}
        \item Ropes (Figure \ref{fig:cuttherope_rope}) are initially deactivated and trigger (become activated, attached to the candy) when the candy enters their radius, then acting as a pendulum. The player can swipe his finger on the rope to cut it. Ropes only trigger once, and are no longer usable once cut. We will refer to this as \textit{consuming} a rope. A rope's length is the size of its trigger radius.
        \item Movable ropes (Figure \ref{fig:cuttherope_moverope}) are ropes that the player can move around within the track. 
        \item Teleporting hats (Figure \ref{fig:cuttherope_hat}) come in pairs. If the candy enters one, it leaves through the other. The candy's velocity is conserved after teleporting, but the movement's direction changes with the hat's alignment.
        \item Spikes (Figure \ref{fig:cuttherope_spike}) break the candy on contact, making the player lose the level.
        \item Balloons (Figure \ref{fig:cuttherope_balloon}) can be pressed by the player to blow air. This applies force to the candy.
        \item Bubbles (Figure \ref{fig:cuttherope_bubble}) activate on contact and make the candy float. Once pressed, the bubbles pop and lose their effect for the remainder of the level.
    \end{itemize}
    
    It may be difficult to understand how the game works from these descriptions alone, so we encourage the reader to watch the game being played to get a more intuitive understanding of it.\footnote{https://www.youtube.com/watch?v=7HgssdcI-EM}
    
    \begin{figure}[!ht]
    \center
     \subfloat[Rope pivot.\label{fig:cuttherope_rope}]{%
       \includegraphics[scale=0.3]{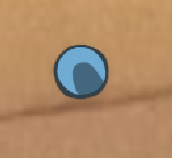}
     }
     \subfloat[movable rope.\label{fig:cuttherope_moverope}]{%
       \includegraphics[scale=0.3]{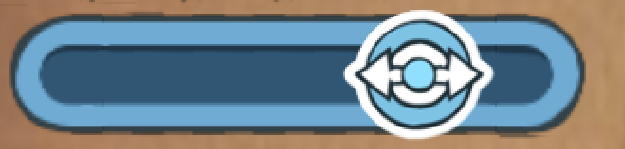}
     }
     \subfloat[Teleport hat.\label{fig:cuttherope_hat}]{%
       \includegraphics[scale=0.4]{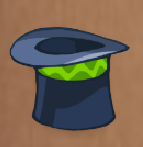}
     }
     \\
     \subfloat[Spikes.\label{fig:cuttherope_spike}]{%
       \includegraphics[scale=0.5]{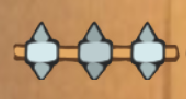}
     }
     \subfloat[Balloon.\label{fig:cuttherope_balloon}]{%
       \includegraphics[scale=0.3]{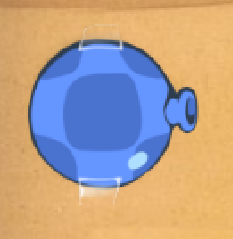}
     }
     \subfloat[Bubble.\label{fig:cuttherope_bubble}]{%
       \includegraphics[scale=0.4]{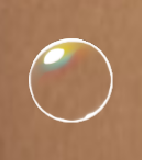}
     }
     \caption{Cut the Rope elements required for the proof}
     \label{fig:cuttherope_mechanics}
    \end{figure}
    
\subsection{Reduction from 3-CNFSAT}\label{fw:cnf}
Here we present the reduction from 3-CNFSAT given by Demaine et. al\cite{nintendopaper}, refining Vigilietta~\cite{forisek}. 
\begin{definition}
3-CNFSAT is the problem of determining whether a boolean formula, in 3 - Conjunctive Normal Form, has a variable assignment that makes its output $true$.
A 3-CNF formula is a conjunction ($\land$) of clauses. Each clause is a disjunction ($\lor$) of at most three literals. A literal is a variable ($x$) or its negation ($\lnot{x}$).
\end{definition}
Figure \ref{fig:cnf_fw} shows the template for the reductions. This applies generically to games where there is an avatar controlled by the player, and the objective is to reach the end of the level (Super Mario Bros. being one example, presented in Demaine et. al\cite{nintendopaper}). The player avatar that begins in Start, and must reach Finish to successfully complete the level. The avatar's movement is restricted by the lines connecting each gadget. Aside from basic traversal (a characteristic shared between many video games), the gadgets required are Crossovers, Locks and One-way paths (see~\cite{viglietta} for alternative approaches), which we briefly describe below.
\begin{figure}[h!]
    \centering
    \includegraphics[scale=0.95]{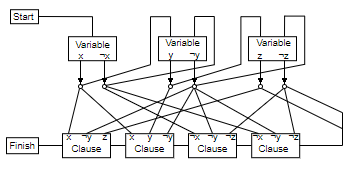}
    \caption{NP-Hardness framework from~\cite{nintendopaper}.}
    \label{fig:cnf_fw}
\end{figure}

\begin{definition}
A lock is a barrier that prevents the avatar from crossing it, unless its opening mechanism has been previously activated.
\end{definition}
The lock gadget is a mechanism that blocks a path, and can be unlocked, unblocking the path. It is used to implement clause satisfiability by associating one lock with each variable in said clause. The level must be setup so that lock $x$ is opened by choosing $true$ as $x$'s value. Lock $\lnot{x}$ remains locked. Each variable selection path (leaving each variable $x$, $y$, $z$ in Figure \ref{fig:cnf_fw}) is connected to the respective lock's opening mechanism. Every lock begins locked.

\begin{definition}
Given two lines that overlap in space, a crossover gadget enforces that the avatar must exit through the same path that he entered from.
\end{definition}The Crossover gadget is used whenever two of the lines cross, forcing the player to stay in the original wire. This effectively transforms the framework into a planar graph.

\begin{definition}
A one-way path is a path between two points, A and B, such that the avatar can go from A to B, but not from B to A.
\end{definition}
This is a simple gadget that prevents the player from traversing a line backwards. It is used immediately after the variable assignment, preventing the player from assigning it both values. A single-use paths can often be used instead of one-way paths.

\paragraph{Framework Overview}
The objective of the player is to go from the Start to the Finish position. The player is presented with a choice at each variable gadget, one leading to the line representing $true$ and the other to the line representing $false$.
Each clause is simply a room with three distinct locks, one for each variable. The clause gadget can be traversed if any of the three locks has been opened, just as a clause is satisfied by any of its three variables.
Each variable is connected to every clause in which it appears. The ``true line'' is connected to every clause where the variable appears non-negated, while the ``false line'' is connected to every clause where the variable appears negated. In the template of Figure \ref{fig:cnf_fw}, variable $x$ is connected to the opening mechanism of the first lock of two (counting from the left) clauses; variable $\lnot y$ is connected to the second lock of the first, third and fourth (counting from the left) clauses.

To successfully traverse the level, the player will have to find a satisfying assignment of the overall formula. The solution for the original 3-CNFSAT formula is directly retrieved from the sequence of lines (true or false) chosen at the variables. A more detailed explanation and proof of the framework's NP-Hardness can be found in Demaine et. al~\cite{nintendopaper} or in the thesis~\cite{mythesis}.
    \begin{figure}[!ht]
    \center
     \subfloat[Variable choice.\label{fig:cuttherope_variableselect}]{%
       \includegraphics[scale=0.37]{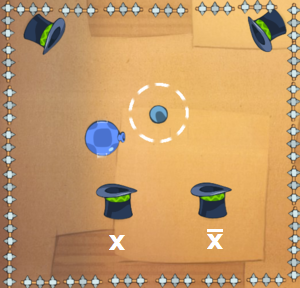}
     }
     \subfloat[Lock.\label{fig:cuttherope_lock}]{%
       \includegraphics[scale=0.28]{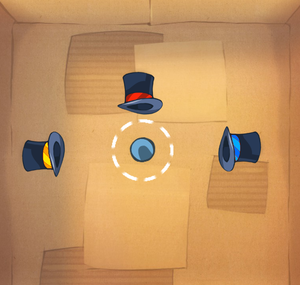}
     }
     \subfloat[Speed adjuster.\label{fig:cuttherope_speed}]{%
       \includegraphics[scale=0.235]{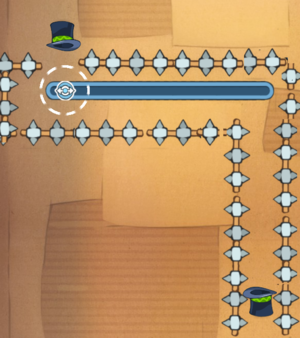}
     }
     \begin{center}
     \subfloat[Clause unlock.\label{fig:cuttherope_clausevisit}]{%
       \includegraphics[scale=0.5]{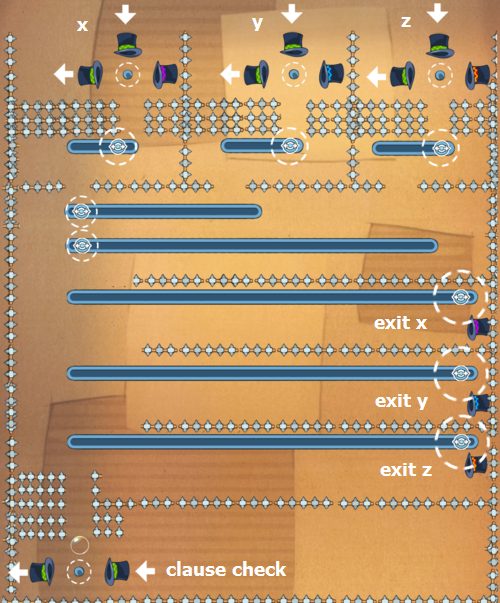}
     }
     \end{center}
     \hfill
     \caption{Cut the Rope gadgets}
     \label{fig:cuttherope_gadgets}
    \end{figure}

\subsection{Proof overview} For Cut the Rope, basic structure follows the framework closely, with three main differences: firstly, use of teleporting hats to simplify the lines and crossovers connecting each gadget; secondly, an additional gadget is also used to maintain an exact velocity magnitude when leaving the hats, controlling the candy's trajectory; thirdly, we used a single lock per clause, with a different opening mechanism for each variable.
Figure \ref{fig:cuttherope_gadgets} shows the gadgets required for the proof.

\subsection{Gadgets}
\paragraph{Lock} Figure \ref{fig:cuttherope_lock} implements the generic locking mechanism. We assume the rope is short enough so that it is impossible to reach any of the top hats by swinging on it. The gadget is traversed horizontally, but only after unlocking it from the top. If the player attempts to traverse the gadget without previously triggering the rope, the candy will be stuck in the rope, failing the level. Entering through the top, with the aid of a Bubble, the player will be able to trigger the rope, cut it, and then exit through the top, opening the lock. After this unlocking, the gadget can be traversed horizontally.

\paragraph{Velocity} Figure \ref{fig:cuttherope_speed} shows how to force an exact velocity when leaving teleport hats. The player enters through the top and leaves through the bottom. The shaft is made arbitrarily narrow to prevent the candy from having any horizontal speed when leaving the gadget. Under the constant force of gravity, and with any vertical speed lost when the rope is triggered, the candy's velocity is entirely determined by the length of this shaft.

\paragraph{Variable choice} Figure \ref{fig:cuttherope_variableselect} shows the variable selection gadget. The candy enters from one of the two hats at the top of the gadget, triggering the rope in the center. The player then uses the balloon to swing the candy, cutting the rope to direct it into one of the two hats (true or false).

\paragraph{Clause gadget} The clause gadget forms the bulk of the proof, shown in Figure \ref{fig:cuttherope_clausevisit}. The two hats at the bottom of the gadget represent the actual clause (using a lock) and its traversal path. The remainder of the gadget is there to guarantee the consistency of the gadget when unlocking the clause from different variables.

The general idea is that each variable has a different entry and exit path, illustrated by the hats' colors. Each exit path (at the top) starts locked, and can only be unlocked when the player enters the gadget through the top. This forces the player to leave only through variables that he entered from (having entered the gadget from $x$, the player is forced to leave through $x$ as well). The fact that every rope can be used only once ensures that each entry or exit path can also be used only once. The clause is unlocked by having the candy drop onto the bubble, triggering and cutting the rope (thus unlocking the clause), then letting it float upward onto its exit rope. For this to be correct, three properties must hold, described below. All three properties apply to the three variable locks at the top of the gadget, which are distinct from the clause's lock at the bottom of the gadget.

\begin{lemma}Property 1: The player must unlock the respective variable lock (consume the rope between the 3 hats on the top) to use the exit. \end{lemma}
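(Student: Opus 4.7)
The plan is to leverage the Lock gadget semantics established earlier together with the single-use property of ropes in Cut the Rope. First, observe that each variable's exit hat in the clause gadget sits behind a lock whose trigger rope is positioned in the upper region among the exit hats. By the earlier Lock lemma, this lock cannot be traversed (i.e., the exit hat cannot be used) unless its rope has first been triggered and cut. Hence, for any variable $v$, consuming $v$'s lock rope is necessary before the candy can leave through $v$'s exit hat.

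Next, I would verify that the candy actually can reach and cut the rope associated with variable $v$, and only from the intended direction. Using the Velocity gadget to fix the outgoing speed of every teleport, together with the narrowness of the shafts and the alignment of the top hats, the candy follows an essentially deterministic trajectory once it emerges. I would enumerate the ways the candy can arrive in the upper region (namely, through one of the variable entry hats) and check that entering through $v$'s own entry hat places the candy inside the trigger radius of $v$'s lock rope, with a swing that can be cut so as to redirect the candy toward $v$'s exit hat rather than falling back down.

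The main obstacle I expect is ruling out cross-talk between variables: could a candy that entered through $x$'s entry hat swing onto $y$'s or $z$'s lock rope and consume the wrong lock? The argument needs to show that each entry hat deposits the candy into a region whose only reachable rope (given gravity, the hats' orientations, and the fixed entry velocity) is the matching variable's lock rope, and that after cutting, the only accessible destination is the matching exit hat. Since every rope can be consumed at most once, such a confinement argument both prevents accidental consumption of the wrong lock's rope and forces the "right" rope to be cut whenever the exit is used, establishing Property 1.
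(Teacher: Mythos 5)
Your first paragraph is exactly the paper's proof: Property 1 follows directly from the Lock gadget semantics --- the exit path cannot be traversed horizontally unless its rope has already been consumed, otherwise the candy gets stuck and is forced down into the gadget. The additional trajectory and cross-talk analysis in your second and third paragraphs is not needed for Property 1 itself; those concerns (reaching a lock from below, or reaching the wrong variable's lock) are precisely what the paper defers to Properties 2 and 3.
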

\begin{proof} This follows from how the lock mechanism works. To traverse the path horizontally, the rope must have been previously consumed, or the candy will get stuck and be forced to go down the gadget.
\qed \end{proof} 

\begin{lemma}Property 2: The player can't reach a variable lock from the path below and go back down through the same path.
\end{lemma}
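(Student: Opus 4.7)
My plan is to prove Property 2 by enumerating the mechanisms available inside the clause gadget that could possibly carry the candy downward from a variable lock, and then showing that each of them either fails (the candy gets stuck) or is self-defeating (the candy hits spikes or falls off the level). The engine of the proof is the single-use nature of bubbles and ropes, already invoked for the other gadgets.

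First I would observe that inside the clause gadget the only source of upward force on the candy is the single bubble at the base of the shaft; there is no balloon and no second bubble. Gravity acts downward at all times unless the candy is either attached to a rope (swinging as a pendulum) or floating in the bubble. Hence to have reached a variable lock from below, the candy must have floated on that bubble, and by the rules of the game (bubbles pop on contact) the bubble is now consumed and cannot be reused. The same reasoning shows that no other configuration of the gadget can supply upward force during a subsequent descent.

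Next I would analyze the two actions the player can take on arriving at the variable lock from below. Either the rope there is triggered and not cut, in which case the candy is trapped swinging as a pendulum and, since its rope length is chosen to be too short to reach any adjacent hat (by the short-rope convention used in Figure~\ref{fig:cuttherope_lock}), it cannot leave the lock at all; or the rope is cut, in which case the candy is released and falls under gravity. Because the shaft beneath the lock is made arbitrarily narrow (reusing the construction of Figure~\ref{fig:cuttherope_speed}), horizontal motion is negligible and the candy re-enters the same vertical path it rose through. With the bubble gone, nothing arrests this descent, and the gadget is built so that the region directly below the consumed bubble is occupied by spikes (or by an out-of-bounds region), so the candy is destroyed and the level is lost. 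Therefore no play that reaches a variable lock from below can legally return downward through the same path.

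The main obstacle I expect is to rule out clever exploitation of the teleport hats and of pendulum swings inside the gadget: in principle the player could hope to swing the candy from a variable-lock rope into a neighbouring hat, or to be caught on descent by a different rope belonging to a different variable's lock. The former is handled by the short-rope assumption, which we already use in Property~1 and in the lock gadget itself; the latter is handled by the fact that all such ropes are positioned above the level of the bubble and, once the candy has passed them during its ascent, lie outside the narrow descent shaft (a geometric observation that can be stated explicitly once the exact placement of the three variable columns in Figure~\ref{fig:cuttherope_clausevisit} is fixed).
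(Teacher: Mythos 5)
Your overall strategy---enumerate the ways the candy could descend and show each is blocked by a single-use mechanism ending on spikes---is the same shape as the paper's argument, but you have misidentified the critical consumed resource, and in doing so you invent geometry that the gadget does not have. The paper's proof hinges on the \emph{topmost movable rope}: to float up from the bubble, avoid the spikes, and align the candy with the variable lock from below, the player is forced to use that movable rope, and because its trigger radius is short the rope must be \emph{cut} (hence consumed) to complete the ascent. On the way back down the player must burst the bubble and release the candy, and it is precisely the absence of that already-consumed movable rope that lets the candy fall onto the spikes. Your version instead makes the bubble the decisive single-use item and postulates that ``the region directly below the consumed bubble is occupied by spikes.'' That placement is not how the gadget is built (the bubble sits at the clause lock at the bottom and is part of the intended unlocking path), and it is not needed once the movable rope is accounted for. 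You also slightly misstate the rule: bubbles activate on contact but pop only when \emph{pressed} by the player, so ``bubbles pop on contact'' is not the mechanism that consumes it.

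The practical consequence is that your proof leaves its key step resting on an unverified geometric assumption---you concede as much in your last paragraph, deferring the placement of ropes and spike regions to ``once the exact placement \dots is fixed.'' The paper closes exactly that hole by routing both the ascent (spike avoidance and alignment) and the failed descent through the same movable rope, so that consuming it on the way up is what dooms the way down. If you rewrite your argument around the movable rope rather than the bubble, the enumeration you set up in your first paragraph goes through essentially as the paper intends.
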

\begin{proof} This could only be violated with the use of the bubble, floating all the way up to the top. However, to avoid the spikes and align the candy with the lock (from below), the topmost movable rope must be used. Note that the rope's range is short enough so that, to float up and trigger the lock, the player must cut the movable rope. After triggering the lock the player must burst the bubble and cut the rope to move back down. 
Doing so would let the candy fall onto the spikes, since the movable rope has already been consumed.
\qed \end{proof}
   
\begin{lemma}Property 3: The player can't reach the variable lock from the same colored hat twice.
\end{lemma}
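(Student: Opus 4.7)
The plan is to trace which single-use elements are consumed during a successful entry through a particular colored hat, and then to show that once these are spent no valid trajectory can repeat the feat. First I would identify the \emph{colored entry rope} attached to each top hat: when the candy arrives through a colored entry hat, it must engage this rope in order to be redirected onto the horizontal variable-lock-traversal path, because under pure gravity alone it would fall straight down toward the bubble and clause lock below rather than align with the narrow region where the variable lock sits between the three hats on top.

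Next I would invoke the game rule that each rope can be consumed only once. Once the colored entry rope has been cut, a candy arriving again through the same colored hat is in free fall with its speed fixed by the upstream velocity gadget of Figure~\ref{fig:cuttherope_speed}. Because the gadget designer controls the geometry, the rope radius, hat alignment, and shaft width can be chosen so that no unaided trajectory from that hat passes through the variable lock area at all; the second visit therefore cannot even reach the position where the lock would need to be triggered.

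The main obstacle I anticipate is ruling out ``clever'' reuse, in particular the possibility of using the central balloon or the remaining bubble to push the candy upward and re-align it with the variable lock after the entry rope is already gone. To handle this I would lean on Property~2, which already forbids floating up from below through an unused path, together with the placement of spikes along every alternative upward trajectory inside the clause body, so that any attempt to approach the variable lock region without the original entry rope ends in impact with spikes and loss of the level. Combining single-use rope consumption with these geometric obstructions should yield that a second pass through the same colored hat cannot reach the variable lock, establishing the property.
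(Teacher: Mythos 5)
Your core mechanism is the right one --- ropes trigger only once, so whatever rope is needed to bring the candy to a given colored hat can serve at most once --- but you attach the single-use bottleneck to the wrong rope, and as a result you prove a slightly different fact than the paper does. You posit a \emph{colored entry rope} that redirects a candy arriving \emph{through} the hat from outside, and argue that a second arrival falls uselessly. The paper's proof instead concerns reaching the colored hat \emph{from inside the clause gadget}: to get up to $x$'s colored hat (and hence to $x$'s variable lock) the candy must consume the movable ``exit $x$'' rope, so this ascent can happen at most once. The payoff of the paper's framing is the consequence you do not capture: a player who uses the exit-$x$ rope to illegitimately open $x$'s variable lock has, in doing so, destroyed the only means of ever reaching $x$'s exit hat again, so the cheat is self-defeating --- this is exactly what lets the paper conclude that the only safe traversal is to unlock a variable lock by entering from the top and then leave through the matching exit rope. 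Your appeal to Property~2 and to spikes lining ``every alternative upward trajectory'' is a reasonable defensive move for a blind reconstruction, but it is not how the gadget of Figure~\ref{fig:cuttherope_clausevisit} is argued; the paper needs only the consumption of the movable exit rope, not additional geometric obstructions. If you rewrite your argument so that the consumed resource is the exit movable rope (the sole route from the gadget's interior up to the colored hat), and note explicitly that this neutralizes the ``open a wrong lock and come back down'' attack, you will have the paper's proof.
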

\begin{proof} this follows from a similar principle to property 2's. To reach the colored hat, the candy must consume the colored hat's movable rope (to reach $x$'s purple hat at the top, the candy must consume the ``exit $x$'' rope). Doing this, the player could open an incorrect lock and move back down. However, because of the consumed rope, there is no way to reach the same colored hat again, and so the player will not be able to leave through the incorrect entry point.

From the three properties, we can conclude that the only way to traverse the clause safely is to unlock the variable lock by entering through the top, and then use the respective movable rope to exit the gadget. Note that the exit movable ropes can intercept the candy in its fall, so the clause does not \textit{need} to be unlock every time (or even any time). The player should then open it on the first clause visit, and simply traverse the gadget for the subsequent variables.
\qed \end{proof}

\begin{theorem}Cut the Rope is NP-Hard.\end{theorem}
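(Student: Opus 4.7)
The plan is to complete the reduction $\text{3-CNFSAT} \le_P \text{Cut the Rope}$ by showing how the gadgets already described compose into a level that is solvable exactly when the input formula is satisfiable. Given a 3-CNF formula $\varphi$ with $n$ variables and $m$ clauses, I would construct a level consisting of one variable choice gadget per variable, one clause gadget per clause, and teleporting hat pairs acting as the ``lines'' of the NP-Hardness framework in Figure~\ref{fig:cnf_fw}. Since teleporting hats make the connecting paths planar by construction (any two pairs can be routed without physical crossing in the level geometry), the crossover gadget of the framework is essentially free here. The one-way requirement on the variable-choice edges is enforced by the single-use nature of the rope inside the variable choice gadget: after the candy has been flung through one of the two exit hats, the rope is consumed and the player cannot return to re-assign the variable.

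Next, I would instantiate the wiring. From the variable choice gadget, the two exit hats are routed, via chains of teleport pairs, to the appropriate variable-unlocking hats at the top of every clause gadget in which the corresponding literal appears; then, after the last such clause for that literal is visited, the line rejoins the trunk of the framework and proceeds on toward the Finish. Between every pair of teleport hats I would insert a velocity adjuster (Figure~\ref{fig:cuttherope_speed}) so that, regardless of the incoming trajectory, the candy emerges from each hat with a canonical speed and direction; this guarantees the trajectory computations inside each downstream gadget behave identically to the isolated analyses above, so the gadgets compose without unwanted side effects.

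Correctness then follows from the three gadget lemmas. By the properties of the clause gadget, the player can traverse a clause only after first unlocking its inner rope by dropping in from the top through one of the variable-matching entry hats, which (by Property~3) can be done at most once per literal and (by Properties~1 and 2) must be paired with exiting through the same literal's hat. Hence the candy reaches Finish if and only if, for every clause, the player can enter from at least one of its three literal lines, i.e. the chosen true/false selection at each variable satisfies every clause, i.e. $\varphi$ is satisfiable. The assignment itself is extracted in linear time from the sequence of exit hats used at the variable gadgets. Since each gadget has constant size and there are $O(n+m)$ of them, plus $O(nm)$ teleport and velocity-adjuster pairs to route the literal wires, the construction is polynomial in $|\varphi|$.

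The main obstacle I expect is not the combinatorial bookkeeping but the physical robustness of the composition: I need to argue that the player cannot exploit the continuous physics (swinging, bubble floating, balloon impulses, or clever residual velocities) to bypass a lock, trigger a rope out of order, or cross the level boundary in unintended ways. I would handle this by (i) appealing to the velocity adjuster to normalise the candy's state at every gadget boundary, so each gadget can be analysed locally under the assumptions already made in its lemma, (ii) choosing rope lengths, shaft widths, and spike placements so that the only successful trajectories inside a gadget are the ones enumerated in its proof, and (iii) observing that outside the gadgets there is only empty space bounded by spikes, so the sole interaction with the level is through the advertised hat entrances. This locality argument is what reduces the global solvability of the level to the conjunction of the gadget-level guarantees, completing the NP-hardness proof.
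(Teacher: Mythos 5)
Your proposal follows essentially the same route as the paper's proof: a direct implementation of the 3-CNFSAT framework where variable and clause gadgets are composed via teleport hats (eliminating crossovers), velocity adjusters are inserted between every pair of hats to normalise the candy's state, and correctness is delegated to the per-gadget lemmas. The additional care you take with the one-way property (via rope consumption) and with ruling out physics exploits at gadget boundaries is more explicit than what the paper writes down, but it elaborates the same argument rather than replacing it.
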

\begin{proof} We show that 3-CNFSAT $\le _p$ CutTheRope by implementing the variable assignment and clause gadgets from Framework \ref{fw:cnf} with the respective CutTheRope gadgets. Wire endpoints are replaced with teleport hats, removing the need for crossover gadgets. Between any two teleport hats, we add the additional gadget in Figure \ref{fig:cuttherope_speed} to account for speed gain/loss during the gadgets' traversal. 
For every 3-CNFSAT formula, we create a CutTheRope level such that the level can be successfully finished if and only if the original formula is satisfiable. The variables' values between the level and the formula have a one-to-one correspondence, given by the path traversed in Figure~\ref{fig:cuttherope_variableselect}, where the left hat represents true and the right hat represents false. \qed \end{proof}

\section{Back to Bed}\label{proof:backtobed}
Back to Bed is a puzzle game where the player is tasked with escorting a sleep-walking character to the bed, preventing his falling into the abyss or being caught by patrolling dogs.

We prove that this game is PSPACE-Hard with a reduction from TQBF, using the framework presented in Section \ref{fw:tqbf}. The game features teleporting mirrors, making the wiring easier and removing the need for a crossover gadget. 

\subsection{Rules} There are two characters: the player avatar, and the sleep-walker; there are also two objects the player can pick up and place down: apples and fish bridges; finally, there are two kinds of obstacles: patrolling dogs, and holes. Game levels are built out of solid tiles, walls, and teleporting mirrors. The sleep-walker always walks forward, turning ninety degrees to the right when it collides with an apple or a wall. The avatar can be moved at will (over solid tiles) and is able to carry one of two objects at a time: an apple, or a bridge. Apples occupy a single tile, and require a solid tile below them. Bridges are four tiles long and can cover any gap of length $2$, but need a solid tile below its two endpoints. If the sleep-walker walks onto a hole (without a bridge), he will fall and the level will restart. The game also features dogs which walk forward and turn ninety degrees to the right when they collide with an obstacle, or a hole (thus, dogs never fall into the abyss). Mirrors always appear in pairs teleport, and when the avatar or sleep-walker walk into one of the mirrors, they are teleported to the other.

\subsection{Reduction from TQBF}\label{fw:tqbf}
Here we present the reduction from PSPACE presented in~\cite{viglietta}, refining~\cite{forisek}, reducing from the True Quantified Boolean Formula (TQBF) problem.
\begin{definition}
TQBF is the problem of determining whether a boolean formula, with quantified variables (existentially and universally) is true.
A universally quantified variable must satisfy the formula when it is true and when it is false; an existentially quantified formula must satisfy the formula with either of its values.
Quantifiers are applied at the beginning of the formula. We are assuming the boolean formulas are also in 3-CNF.
\end{definition}

Figure \ref{fig:qbf_fw} shows the template for the reductions, with Figure \ref{fig:qbf_gadgets} showing the specific gadgets, implemented with Doors and their respective opening and closing mechanisms. Similarly to the previous framework, a Crossover gadget is typically required, but trivially solved with the use of teleporting mirrors; instead of One-way paths and Locks, only a Door gadget is required to reduce from TQBF.

\begin{figure}[h!]
    \centering
    \includegraphics[scale=0.85]{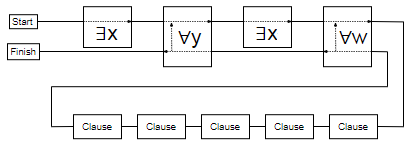}
    \caption{PSPACE-Hard framework from~\cite{nintendopaper}}
    \label{fig:qbf_fw}
\end{figure}

\begin{definition}\label{def:door}
A door is a barrier that prevents a character from crossing it when it is closed. Doors have two mechanisms: one to open it, and one to close it.
\end{definition}A door is similar to a lock, but it can be opened as well as closed. This property will be used to force the player to traverse the level with both values for universally quantified variables (the character must traverse the clauses with $x$ doors opened and $\lnot x$ door closed, and afterward with $x$ doors closed and $\lnot x$ doors opened). The player is always forced to close doors, but not to open them.

\subsection{Overview} This framework is similar to the previous one. In fact, 3-CNFSAT can be seen as a quantified boolean formula where every variable is existentially quantified. The introduction of universally quantified variables forces the player to visit variables several times, increasing the complexity to PSPACE instead of NP.

As in the previous framework, the objective is for the player to go from Start to Finish.

To do this, the player must go through the sequence of clauses (as well as the variables) multiple times, because of the way that universal variables can be traversed. Looking at Figure \ref{fig:qbf_fw}, we can see that universal variables have two entry points: one from the top, one from the bottom. They also have a path from bottom to top. When entering from the top, the bottom \textit{exit} is closed, forcing the player to traverse the gadget from the bottom to top, which then opens the bottom exit. How this is enforced with the use of Doors is shown in Figure \ref{fig:qbf_uni} and explained later.

Existential variables are similar to the variables in the previous framework. One of its two values is chosen in each traversal. A schematic of the gadget, as a function of doors, is shown in Figure \ref{fig:qbf_exi}. 

Figure \ref{fig:qbf_gadgets} shows the clause and variable gadgets implemented through the use of Door gadgets. The light gray zones represent the level's solid tiles. Dark gray tiles represent Doors. Tiles labelled with $+$ represent a Door's opening mechanism, while tiles labelled with $-$ represent its closing mechanism. Doors $a, b, c, d$ are local to each gadget, while clause doors $D_0, D_1, D_2$ are more global, being opened or closed by the variable gadgets' traversal.

\paragraph{Clauses}
A clause, shown in Figure \ref{fig:qbf_clause}, is simply a fragment of a level that can be traversed through any of three paths, with a Door in each path. The player should choose to traverse through whichever Door is open. If no door is open, either the player must have chosen an incorrect assignment of the variables, or the formula is false.

\paragraph{Existentially quantified variables} 
The existentially quantified variable, shown in Figure \ref{fig:qbf_exi}, can be traversed in either of two paths. The top path represents an assignment of $true$, while the bottom path an assignment of $false$. When traversing the top path, every Door of the clauses in which the variable appears with a positive value ($x_0$, $x_1$, etc.) are opened, while those in which it appears negatively valued ($\lnot x_0$, $ \lnot x_1$, etc.) are closed. In the bottom path it is the other way around. 

\paragraph{Universally quantified variables} The universally quantified variable, shown in Figure \ref{fig:qbf_exi}, contrary to the previous one, gives no choice to the player. Upon first entering the gadget, from the top left, the player is forced to assign the value true (by analogy with the previous gadget's sequence of opening and closing mechanisms), as door $c$ is closed, leave through the top right. Upon returning, through the bottom right, the player is forced to assign the value false, because door $d$ is closed. While choosing the value $false$, door $b$ is closed (preventing backtracking) and door $d$ is opened, allowing the player to leave the variable after a second return. We can see that, to successfully traverse the level resulting from this construction, the clause must always be satisfied by both values of a variable.

\begin{figure}[!ht]
 \subfloat[TQBF clause gadget.\label{fig:qbf_clause}]{%
   \includegraphics[scale=0.4]{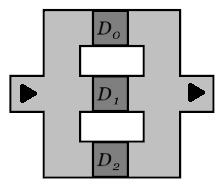}
 }
 \hfill
 \subfloat[TQBF existential quantifier gadget for x.\label{fig:qbf_exi}]{%
   \includegraphics[scale=0.5]{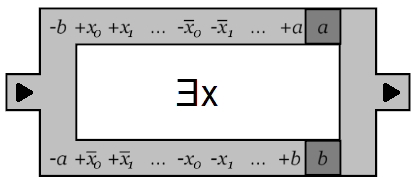}
 }
 \hfill
 \center
 \subfloat[TQBF universal quantifier gadget for x.\label{fig:qbf_uni}]{%
   \includegraphics[scale=0.5]{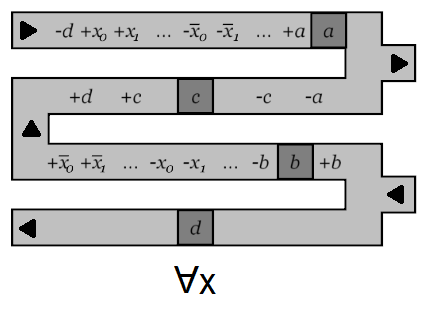}
 }
 \hfill
 \caption{TQBF gadgets from~\cite{nintendopaper}.}
 \label{fig:qbf_gadgets}
\end{figure}

\subsection{Proof overview} We now describe how we implement each gadget using Back to Bed's rules. The most important gadgets are shown in Figure \ref{fig:backtobed_gadgets}, with additional connecting gadgets in Figure \ref{fig:backtobed_forkjoin}. White tiles represent holes. The light blue rectangles show where the bridge can be placed. Red and green rectangles show where dogs can be patrolling. We implement doors using the patrolling dogs, as shown in Figure \ref{fig:backtobed_door}. The dog patrols either the red or the green path, one implementing an open door, the other a closed one. Placing the bridge between the two paths allows the dog to move between them.

Apples are used to implement any choice in the path, letting the player control redirect sleep-walker. We add an additional gadget (Figure \ref{fig:backtobed_pathseg}) to prevent apples from being moved to any gadget other than their starting one. 
The player is assumed to always be carrying a bridge (temporarily placing down it to carry apples or to help the sleep-walker cross gaps). The level is punctuated with holes, forcing the player to escort the sleep-walker closely, placing the bridge to prevent the sleep-walker from falling.

\begin{figure}[!ht]
\center
 \subfloat[Existential gadget.\label{fig:backtobed_existential}]{%
   \includegraphics[scale=0.4]{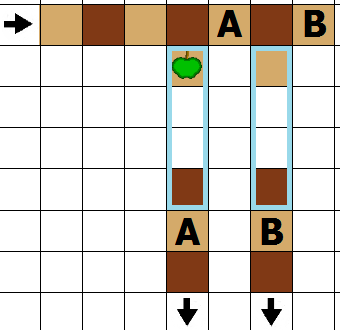}
 }\hfill
 \subfloat[Clause gadget.\label{fig:backtobed_clause}]{%
   \includegraphics[scale=0.4]{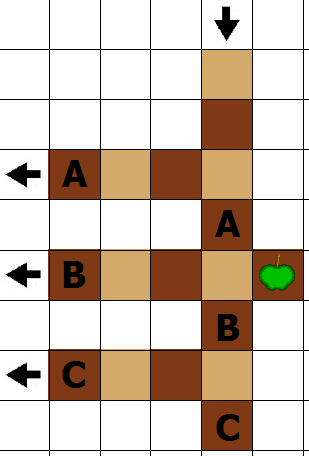}
 }\\
 \subfloat[Path-Segment gadget.\label{fig:backtobed_pathseg}]{%
   \includegraphics[scale=0.4]{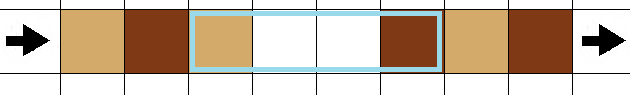}
 }\\
 \subfloat[Extending Path-Segment gadgets.\label{fig:backtobed_pathseg2}]{%
   \includegraphics[scale=0.35]{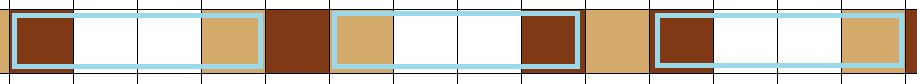}
 }\\
 \subfloat[Door gadget.\label{fig:backtobed_door}]{%
   \includegraphics[scale=0.5]{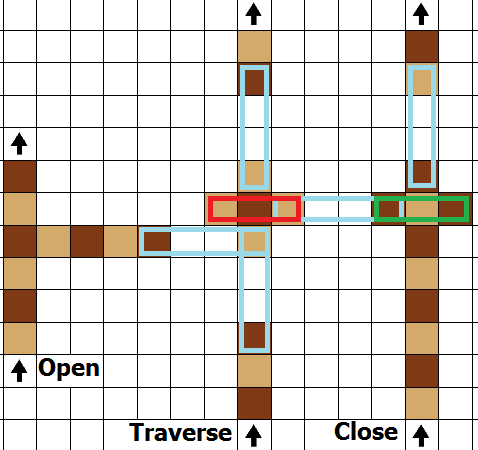}
 }
 \caption{Back to Bed gadgets}
 \label{fig:backtobed_gadgets}
\end{figure}

\subsection{Gadgets}
\paragraph{Door gadget (Figure~\ref{fig:backtobed_door})} This is the most important gadget in this reduction, as it is what makes the game PSPACE-Hard, instead of merely NP-Hard.

The door's state is given by the dog's patrolling area. As stated before, dogs always walk forward, until they encounter an obstacle or gap, at which point they turn 90 degrees clock-wise. In our reduction, each door has one dog, which may be on the green or red path. When the dog is patrolling on the green path, the door is open (as it leaves the Traverse path unobstructed); conversely, when the dog is patrolling the red path, the door is closed (as its Traverse path is obstructed).

Because no apples may be carried into this gadget, the only way to interact with the dog is through bridges. A bridge between the red and green paths will allow the player to change the door's state by waiting for the dog cross the bridge, thus switching from a red to a green patrolling path, or vice-versa.

\begin{lemma}The gadget in Figure \ref{fig:backtobed_door} implements a Door gadget (Definition \ref{def:door}), which can be opened and closed; the traverse path can only be traversed when the door is opened.\end{lemma}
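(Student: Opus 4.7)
The plan is to verify each requirement of Definition~\ref{def:door} in turn: the gadget must exhibit two distinguishable states (open and closed) corresponding to the dog's patrol region, it must admit an opening and a closing mechanism the player can invoke independently, and the Traverse path must be passable by the sleep-walker exactly when the gadget is in the open state. I would treat the dog's patrol loop on the green (resp.\ red) rectangle as an invariant: because dogs walk forward and turn 90 degrees clockwise upon hitting a wall or hole, each rectangle is a stable cycle as long as no bridge connects it to the other one. This gives the two well-defined door states.

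Next I would describe the opening and closing mechanisms. Since the player is assumed to always carry a bridge and cannot bring an apple into this gadget (by the Path-Segment gadget of Figure~\ref{fig:backtobed_pathseg}), the only available interaction with the dog is placing the bridge across the gap separating the red and green rectangles. I would argue that once the bridge is placed, the dog is deterministically deflected onto the other rectangle on its next pass (its forward-plus-turn behavior forces it across), after which the player picks the bridge back up and the dog is trapped in the new rectangle. Placing the bridge when the dog is on red therefore performs the opening; placing it when the dog is on green performs the closing. The player may choose whether to perform either operation, matching the definition.

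For the Traverse property, I would observe that the only obstruction on the horizontal Traverse corridor is the tile occupied by the dog when it patrols the red path; the green path lies off the corridor. Hence when the door is closed the sleep-walker collides with the dog (failing the level), while when it is open the corridor is clear and the sleep-walker passes through as in any ordinary passage, with the player laying the bridge over the intermediate hole in the usual manner.

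The main obstacle I anticipate is ruling out unintended behaviors during the bridge swap: I need to check that the player cannot use the bridge placement to let the sleep-walker sneak through while the dog is mid-transition, and that the dog cannot be left straddling both rectangles or forced onto the Traverse corridor itself. I would address this by fixing the geometry so that the bridge tiles are disjoint from the Traverse corridor, so that any configuration reachable after the player reclaims the bridge has the dog entirely inside one of the two rectangles, and so that while the bridge is down the sleep-walker is physically separated from the corridor (the surrounding holes of the Path-Segment gadget prevent the sleep-walker from entering the door gadget until the player is ready). With these geometric checks, the two-state abstraction is faithful and the lemma follows.
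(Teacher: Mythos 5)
Your two-state abstraction (dog trapped on the red or green cycle, bridge placement as the only way to move it between them) and your argument for the Traverse condition match the paper's proof. But there is a genuine gap: you collapse the opening and closing mechanisms into a single toggle interaction that the player ``may choose whether to perform,'' whereas the paper's gadget has three \emph{distinct} paths --- Open, Close, and Traverse --- and the essential property it establishes is that \emph{traversing the Close path always leaves the door closed}, regardless of the player's wishes. The mechanism is that on the Close path the sleep-walker must walk through the middle section where a dog on the green cycle would catch him, so the player is compelled to first bridge the gap, wait for the dog to cross onto the red cycle, and only then reclaim the bridge to let the sleep-walker across. Definition~\ref{def:door} asks for two separate mechanisms, and the surrounding framework (Section~\ref{fw:tqbf}) explicitly relies on the player being \emph{forced} to close doors but merely \emph{permitted} to open them; a toggle that the player can decline to operate would let the player skip the closing of $\lnot x$ doors in the universal-quantifier gadget and break the reduction.

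A second, smaller omission: for the Traverse path you argue the bridge tiles are geometrically disjoint from the corridor, but the paper's point is different --- the Traverse path itself contains a gap that the sleep-walker must cross, so the single bridge is occupied there and cannot simultaneously be used to move the dog. That is what guarantees the Traverse path is state-preserving. You should restate your geometric checks around the three named paths and, in particular, prove the forced-closure property of the Close path; the rest of your argument then goes through essentially as in the paper.
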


\begin{proof} There are three paths in the gadget. Open, Close and Traverse. The dog is walking back and forth in either the red or green paths. If it is on the green path, the door is open; if it is on the red path, the door is closed.

In the \textit{Open} path, the player may \textit{choose} to change the door's state, or not. The sleep-walker will simply traverse in a straight line, bottom to top. This gives the player time to take the horizontal path and use the bridge to change the dog's state, if they so choose.

Traversing the \textit{Close} path \textit{always leaves the door closed}. For the sleep-walker to walk through the middle section, the dog cannot be patrolling the green area, or it would collide with the dog, resulting in a level restart. This forces the player to close the door (if it is open, with the dog patrolling the green path) by bridging the gap until the dog crosses it and patrols the red path, and then removing the bridge and using it for the sleep-walker to walk across the gap. Note that the beginning of the path gives the player enough time to change the dog's path from green to red (using the bridge), if needed.

The \textit{Traverse path} does not change the door's state, but it requires it to be open beforehand (dog patrolling the green zone) to be traversed. The gap in the path prevents the player from using the bridge to change the dog's state.
\qed \end{proof}

\paragraph{Path Segment gadget (Figure ~\ref{fig:backtobed_pathseg})} This gadget has two purposes: to keep the player close to the sleep-walker (thus preventing him from changing the game's consistency), and to prevent apples from being moved from one gadget to another. A constant number of these gadgets is added between any two mirrors.

\begin{lemma} The gadget shown in Figure \ref{fig:backtobed_pathseg} forces the player to stay close to the sleep-walker.\end{lemma}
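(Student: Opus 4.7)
The plan is to argue directly from the sleep-walker's forward-walking behavior together with the fact that the player carries exactly one bridge and must reuse it across several successive gaps. First I would describe the invariant: along the path segment the sleep-walker encounters a sequence of holes, each of width two, and no other geometry in the segment lets the sleep-walker bypass a hole, because walls on either side funnel him straight forward and the mirrors bounding the segment are reached only after the gaps. Since walking into an unbridged hole restarts the level, every one of these gaps must have the bridge laid across it by the time the sleep-walker reaches it.

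Next I would quantify the player's freedom. Because the player holds a single bridge and the segment contains more gaps than one, the bridge must be picked up after the sleep-walker has crossed a given gap and laid down again before the sleep-walker reaches the next one. I would then bound the walking time between two consecutive gaps by the (constant) spacing built into the segment, and compare it to the distance the player can travel in the same time while executing the pick-up/carry/drop cycle. The conclusion I would draw is that any excursion by the player away from the sleep-walker has length bounded by a constant, so the player is pinned into a sliding window around the sleep-walker for the full traversal of the segment.

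To close the lemma I would note that this pinning composes across the constant number of path-segment gadgets placed between each pair of mirrors, so the player remains close to the sleep-walker between any two teleportations as well. In particular, the player cannot detour into a different gadget (clause, existential, universal, or door gadget) while the sleep-walker is in transit, which is exactly the ``staying close'' property needed to rule out the inconsistencies the gadget is designed to prevent (notably, shuttling apples between gadgets, which is handled in the companion lemma for the same figure).

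The main obstacle I expect is making the timing argument rigorous without a detailed spatial layout: I have to pick concrete tile counts for the inter-gap spacing and the horizontal maneuvering room so that the pick-up/carry/drop cycle genuinely fits within one gap-to-gap interval and no larger excursion does. Once those constants are fixed, the rest of the argument is a short case analysis over ``where is the bridge now?'' and ``where can the player be?'' at each moment the sleep-walker is about to step onto a hole.
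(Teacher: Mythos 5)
Your proposal is correct and follows essentially the same route as the paper: the single bridge must be relayed from one gap to the next while the sleep-walker crosses the intervening square, and that interval is too short for the player to detour to another gadget, so the player is pinned to a small window around the sleep-walker. The paper's proof is just a terser version of this timing argument (and, like you, it defers the concrete tile-count bookkeeping to the companion apple lemma for the same figure).
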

\begin{proof} The player must move the bridge while the sleep-walker is traversing the center square, or it will inevitably fall down. The traversal of a single square does allow enough time for the player to move to a different gadget or change a door's state, and so must always follow the sleep-walker closely to place bridges before the sleep-walker falls.\qed\end{proof}

\begin{lemma} The gadget shown in Figure \ref{fig:backtobed_pathseg} prevents apples from being carried between gadgets.\end{lemma}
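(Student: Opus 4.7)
The plan is to leverage the single-item-carrying constraint of the game (the avatar may hold at most one of an apple or a bridge, never both) together with the preceding lemma. First I would observe that, by that preceding lemma, while the sleep-walker is crossing the central square of Figure~\ref{fig:backtobed_pathseg} the player must be actively holding and placing the bridge underneath it, for otherwise the sleep-walker walks straight into the hole and the level restarts. The one-object restriction then immediately forbids the player from carrying an apple during this critical interval.

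The next step is to rule out the obvious workaround: pre-dropping the apple on the near side of the gap, returning for the bridge, escorting the sleep-walker across, and afterwards retrieving the apple from the far side. Here the argument splits on where a dropped apple may legally sit. An apple placed inside the corridor of the path-segment would either occupy a solid tile on the sleep-walker's line and force a ninety-degree turn (destroying the intended traversal), or sit on a tile the sleep-walker never reaches. An apple placed outside the corridor has not in fact been transported across the gadget at all, so it must still be carried across at some later moment, at which point the first step of the argument applies again: whenever the apple is being carried, the bridge is not in the player's hand.

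The step I expect to be the main obstacle is the associated timing argument, which closes the loop. I need to rule out a schedule in which the player escorts the sleep-walker across the gap with the bridge, dashes back alone to collect the pre-dropped apple, and then rejoins the sleep-walker before the next path-segment gadget demands another bridge placement. Since a constant number of path-segment gadgets is inserted between any two mirrors and each one requires the player to be adjacent to the sleep-walker to bridge its gap, I would argue by a pigeonhole over consecutive segments that any round trip needed to retrieve the apple takes strictly longer than the idle time the sleep-walker affords between bridge placements, so at some segment the bridge is absent when needed and the sleep-walker falls. Combining this with the case analysis above yields that no apple can leave its originating gadget, as claimed. \qed
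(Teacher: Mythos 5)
Your proposal follows essentially the same route as the paper: the one-item-at-a-time rule forbids holding an apple while the bridge is needed, a dropped apple has nowhere legal to sit inside the segment (the bridge and the sleep-walker occupy the available tiles), and a timing argument kills the shuttle-the-apple-behind-the-walker strategy. The step you flag as the main obstacle is exactly what the paper closes by direct measurement of the gadget: the apple would have to sit in segment $i-2$, the round trip to fetch it and restore the bridge covers at least $15$ squares of player movement against the single square the sleep-walker traverses, so the schedule is infeasible.
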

\begin{proof} As stated in the rules, the player can only carry one item at a time. Apples can't be placed on squares occupied by the sleep-walker or a bridge. While the player is crossing the path segment, there is no empty room for an apple to be placed, since two of the squares (at both extremes) will be occupied by the bridge (if the bridge is on the leftmost square, the rightmost square must be clear and vice-versa), and the center one by the sleep-walker. The time taken for the sleep-walker to traverse a single square is also short enough that the player doesn't have enough time to keep the apple in the segment immediately behind the sleep-walker's. If the current segment is $i$, the apple would be in $i-2$. The player would then need to move the bridge from $\langle i-1, i \rangle$ to $\langle i-2, i-1\rangle$, then grab the apple, moving it from $i-2$ to $i-1$, then move the bridge back to $\langle i-1, i\rangle$, and finally to $\langle i, i+1\rangle$ for the sleepwalker to traverse. The distance between the center of $i$ and $i-2$ is of $10$ squares. Even assuming instantaneous bridge placement, the player would need to traverse at least $15$ squares while the sleep-walker traverses a single one. This makes it impossible for an apple to be carried between path segments.
\qed \end{proof}

\paragraph{Fork and Join} To finish the reduction, we show how to fork and join paths to implement choice. Figure \ref{fig:backtobed_forkjoin} illustrates the gadgets. With these two, it is easy to implement the choice present in existential and universal gadgets from Section \ref{fw:tqbf}. All the remaining elements in the framework are straightforward connections using the teleporting mirrors.

\begin{figure}[!ht]
\center
     \subfloat[Fork.\label{fig:backtobed_fork}]{%
       \includegraphics[scale=0.31]{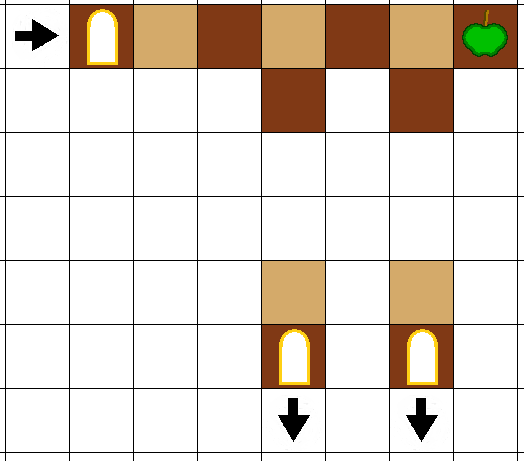}
     }
     \hfill
     \subfloat[Join.\label{fig:backtobed_join}]{%
       \includegraphics[scale=0.5]{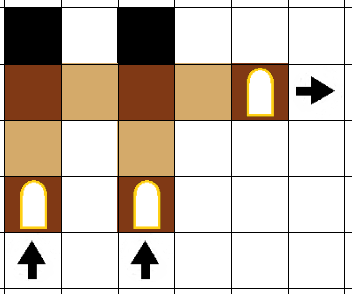}
     }
     \caption{Fork and Join auxiliary gadgets}
     \label{fig:backtobed_forkjoin}
\end{figure}

\begin{theorem}Back to Bed is PSPACE-Hard.\end{theorem}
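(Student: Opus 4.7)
The plan is to exhibit a polynomial-time reduction $\mathrm{TQBF} \le_P \textsc{BackToBed}$, realising, within the rules of Back to Bed, every component of the TQBF template of Figure~\ref{fig:qbf_fw}. Given a quantified boolean formula $\Phi$, I would first replace each door of Figure~\ref{fig:qbf_gadgets} with a copy of the Door gadget (Figure~\ref{fig:backtobed_door}); each clause with the realisation of Figure~\ref{fig:backtobed_clause}; each existential variable with the gadget of Figure~\ref{fig:backtobed_existential}; and each universal variable by composing a Fork, two copies of the existential choice structure wired to opposite door polarities, and the same cascade of local doors $a,b,c,d$ drawn in Figure~\ref{fig:qbf_uni}, so that the first and second passes force the two truth assignments in sequence. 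Every edge of the template is then routed as a chain of teleporting mirrors separated by a constant number of Path Segment gadgets (Figure~\ref{fig:backtobed_pathseg}), with Forks and Joins (Figure~\ref{fig:backtobed_forkjoin}) inserted wherever the template branches. The sleep-walker starts at the Start node and the bed is placed at Finish; the player begins carrying a bridge. Teleporting mirrors remove the need for a crossover gadget, each gadget is of constant size, and every variable contributes at most a linear number of edges, so the total construction is polynomial in $|\Phi|$.

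For correctness I would argue the two directions separately. If $\Phi$ is true, a winning Skolem strategy for the existential player translates into a play: at each existential variable the avatar uses the Fork to commit the sleep-walker to the branch dictated by the strategy, opening and closing the correct family of clause doors along the way; at each universal variable both branches are forced by the internal doors $a,b,c,d$, exactly mirroring the two successive passes of the TQBF gadget. Since the strategy satisfies every clause under the induced assignment, the Door-gadget lemma guarantees that at least one of the three clause doors is open on every visit, so the sleep-walker reaches the bed. Conversely, from any successful play I would read off an assignment strategy by recording, for each variable gadget, which branch the sleep-walker traverses at each of its visits; the supporting lemmas on the Door and Path Segment gadgets ensure this read-off is well-defined, and the fact that each traversed clause requires an open door witnesses satisfaction of $\Phi$.

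The main obstacle is the global consistency argument: ruling out that the player desynchronises from the sleep-walker in order to, for example, swap a dog's patrol in a distant Door while the sleep-walker is between clauses, or ferry an apple from one gadget into another to overturn a prior choice. The per-gadget Path Segment lemmas forbid each individual shortcut, but one must check that enough Path Segments are interposed between any two interacting gadgets so that, during the time the sleep-walker crosses a single centre tile, the avatar cannot traverse any mirror and reach a foreign gadget in time to interfere. I would make this quantitative by comparing the sleep-walker's and avatar's speeds to the minimum walking distance between non-adjacent gadgets and padding each connection with extra Path Segments if necessary; the refined argument from the Path Segment lemma shows that even instantaneous bridge placement is insufficient, so the timing slack is uniform. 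Once synchronisation is established, the door openings and closings along any successful play trace a valid TQBF evaluation, completing the PSPACE-hardness reduction. Membership in PSPACE is not claimed, so no further argument is required.
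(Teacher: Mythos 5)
Your proposal takes essentially the same route as the paper: the paper's own proof of this theorem is a one-line assembly step, instantiating the TQBF framework of Section~\ref{fw:tqbf} with the Door, Path Segment, Fork and Join gadgets and relying on the preceding per-gadget lemmas for correctness. Your version is the same construction spelled out in more detail, and your explicit treatment of the two correctness directions and of the avatar/sleep-walker synchronisation (which the paper delegates entirely to the Path Segment lemmas) is a faithful elaboration rather than a different approach.
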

\begin{proof} Using the framework described in \ref{fw:tqbf} and the gadgets in \ref{fig:backtobed_gadgets}, we show that TQBF $\le _P$ Back to Bed, thus proving the game to be PSPACE-Hard. In TQBF there is no verification necessary, reaching the end of the level is proof by itself that the quantified formula is true.
\qed \end{proof}

\subsection{Example instances} For brevity, we omit the examples of specific instances in this paper. In the main thesis~\cite{mythesis}, every proof has an example in the appendix, showing a formula (or circuit) and the level resulting from the reduction. It may help the reader to understand how gadgets fits together.

\section{Related Work}\label{section:rw} Part of this paper closely follows a series of papers on the complexity of video games initiated by Fori\v sek (2010)~\cite{forisek} and continued by Viglietta (2012)~\cite{viglietta} and Demaine (2012)~\cite{nintendopaper}. Specifically, the 3-CNFSAT and TQBF frameworks were taken from~\cite{nintendopaper}. In this paper, we reduce from 3-CNFSAT to Cut the Rope and from TQBF to Back to Bed.

The reduction from CircuitSAT to pen and pencil puzzle games was first approached by Friedman(2000), proving Spiral Galaxies~\cite{spiral} (published by Nikoli) NP-Complete. The same technique was later used by Brandon McPhail to prove Akari NP-Complete\cite{akari} when played on the standard square grid, another Nikoli pen-and-pencil puzzle game. In this paper, we apply these techniques to show that Hexiom is NP-Complete; in the thesis~\cite{mythesis}, we also extend McPhail's results to different grids (the two other euclidean regular tilings, hexagonal and triangular).

\section{Conclusion}
In this paper we showed that Hexiom and Cut the Rope are NP-Hard, in the process we show several insights about these games. Cut the Rope is a physics based game, hence the gadgets we use are unsual, in particular we need to handle physics problems, such as speed adjusting. Angry birds is another physics based games which was recently shown to be NP-Hard~\cite{angrybirds}. Hexiom is a puzzle game, however a straight forward construction of local constrains to simulate circuit components is tricky. We obtain such a reduction by uncovering the solution's global parity property~\ref{proof:hexiom}. Back to bed is a unique game in the sense that the player's work is to coordinate the avatar's movement. For this game we obtain the stronger result that it is PSPACE-Hard, by using the framework of~\cite{viglietta}. We build door gadgets, that need to be re-usable.

This paper proves the NP-Hardness of Hexiom and Cut the Rope as well as the PSPACE-Hardness of Back to Bed. Using well established frameworks, we show their applicability to modern games, further extending the previous results that NP and PSPACE-Hardness are widespread in games. Cut the Rope and Back to Bed's proofs apply the frameworks to games with novel movement systems. Hexiom's proof shows a mixture of local constraints to simulate circuit components and a global parity constraint to guarantee solutions to the puzzles.

\section*{Acknowledgements}
The work reported in this article was supported by national funds through
Funda\c{c}\~ao para a Ci\^encia e Tecnologia (FCT) through projects NGPHYLO
PTDC/CCI-BIO/29676/2017 and UID/CEC/50021/2019. Funded in part by European
Union’s Horizon 2020 research and innovation programme under the Marie
Sk{\l}odowska-Curie Actions grant agreement No 690941.

\bibliographystyle{plain}
\bibliography{references}

\end{document}